\theoremstyle{plain}
\newtheorem{theorem}{Theorem}
\newtheorem{proposition}{Proposition}
\newtheorem{lemma}{Lemma}
\theoremstyle{definition}
\newtheorem{example}{Example}
\newtheorem{definition}{Definition}
\renewcommand{\vec}[1]{\bm{#1}}
\newcommand{\R}{\mathbb{R}}
\newcommand{\Rn}{\mathbb{R}^n}
\DeclareMathOperator{\conv}{\textnormal{conv}}
\DeclareMathOperator*{\argmax}{\textnormal{arg\,max}}
\newcommand{\bid}{\vec{b}}
\newcommand{\bids}{\mathcal{B}}
\newcommand{\eb}{\vec{e}}
\newcommand{\pb}{\vec{p}}
\newcommand{\qb}{\vec{q}}
\newcommand{\rb}{\vec{r}}
\newcommand{\xb}{\vec{x}}
\newcommand{\yb}{\vec{y}}
\newcommand{\zb}{\vec{z}}
\newcommand{\ssb}{\vec{s}}
\newcommand{\vb}{\vec{v}}
\newcommand{\nullgood}{money}
\pgfplotsset{compat=1.17}
\definecolor{burgundy}{HTML}{882426}
\definecolor{midnight}{HTML}{151B54}
\definecolor{bluegray}{rgb}{0.4, 0.6, 0.8}
\definecolor{aqua}{rgb}{0.0, 1.0, 1.0}
\newcommand\bidsize{1.5mm}
\tikzset{price/.style={draw=black, circle, fill=black, very thick, minimum size=3.5, inner sep=0pt, outer sep=0pt}}
\tikzset{query/.style={draw=blue, fill=blue, circle, minimum size=0.4, inner sep=0pt, outer sep=0pt}}
\tikzset{querylabel/.style={font=\small}}
\tikzset{posbid/.style={draw=black, circle, thick, minimum size=\bidsize, inner sep=0pt, outer sep=0pt, fill=black}}
\tikzset{negbid/.style={draw=black, circle, very thick, minimum size=\bidsize, inner sep=0pt, outer sep=0pt, fill=white}}
\tikzset{boundingbox/.style={dashdotted, thick, color=blue}}
\tikzset{margin/.style={thick}}
\tikzset{hyperplane/.style={thick, dashed, color=gray}}
\tikzset{facet/.style={thick}}
\tikzset{bundle/.style={align=center, font=\scriptsize, color=blue}}
\tikzset{contour/.style={color=midnight}}
\tikzset{contourlabel/.style={draw=black, fill = white, rectangle, font=\tiny}}
\tikzset{boundary/.style={color=SeaGreen, ultra thick}}
\tikzset{closed/.style={draw=red, circle, fill=red, very thick, minimum size=1.5mm, inner sep=0pt, outer sep=0pt}}
\tikzset{open/.style={draw=red, circle, fill=white, very thick, minimum size=1.5mm, inner sep=0pt, outer sep=0pt}}
\tikzset{weight/.style={draw=gray, fill=white, circle, minimum size=0pt, inner sep=1pt, outer sep=0pt, font=\scriptsize}}
\tikzset{setI/.style={align=center, color=black}}
\newcommand\blfootnote[1]{
  \begingroup
  \renewcommand\thefootnote{}\footnote{#1}
  \addtocounter{footnote}{-1}
  \endgroup
}
\title{Competitive and Revenue-Optimal Pricing with Budgets}
\author[1]{Simon Finster}    
\author[2]{Paul W.~Goldberg}
\author[3]{Edwin Lock}
\affil[1]{CREST-ENSAE and Inria/FairPlay, \href{mailto:simon.finster@ensae.fr}{simon.finster@ensae.fr}}
\affil[2]{Department of Computer Science, University of Oxford, \href{mailto:paul.goldberg@cs.ox.ac.uk}{paul.goldberg@cs.ox.ac.uk}}
\affil[3]{Departments of Computer Science and Economics, University of Oxford, \href{mailto:edwin.lock@cs.ox.ac.uk}{edwin.lock@cs.ox.ac.uk}}
\date{8 April 2025}
\begin{document}

\maketitle

\begin{abstract}
In markets with budget-constrained buyers, competitive equilibria need not be efficient in the utilitarian sense, or maximise the seller's revenue. We consider a setting with multiple divisible goods. Competitive equilibrium outcomes, and only those, are constrained utilitarian efficient, a notion of utilitarian efficiency that respects buyers' demands and budgets. Our main contribution establishes that, when buyers have linear valuations, competitive equilibrium prices are unique and revenue-optimal for a zero-cost seller.

\vspace{0.5em}

\noindent \textbf{Keywords:} competitive equilibrium, revenue maximisation, efficiency, market design, budget constraints, Fisher markets, product-mix auctions, arctic auction

\vspace{0.5em}

\noindent \textbf{JEL codes:}
D40,  
D44,  
D50  
\end{abstract}

\blfootnote{\emph{Acknowledgments}: We thank Péter Eső, Bernhard Kasberger, Paul Klemperer, Alexander Teytelboym, Zaifu Yang, and anonymous reviewers for insightful discussions and comments. During the work on the final version of the paper, Goldberg and Lock were supported by a JP Morgan faculty fellowship and EPSRC grant EP/X040461/1. Work on the final version of this project was also supported by the National Science Foundation under Grant No. DMS-1928930 and by the Alfred P. Sloan Foundation under grant G-2021-16778, while Finster was in residence at the Simons Laufer Mathematical Sciences Institute (formerly MSRI) in Berkeley, California, during the Fall 2023 semester.

A preliminary version of this article was presented at WINE'23 under the title ``Substitutes markets with budget constraints: solving for competitive and optimal prices''.}

\section{Introduction}
\label{sec:introduction}


The standard model of a seller is that of revenue maximisation. But client satisfaction and participation may lead sellers to also prioritise social welfare and consider competitive pricing. In many settings, these are conflicting objectives; e.g., when agents have concave values, competitive pricing leads to lower revenue for the seller than revenue-optimal pricing (as we demonstrate in \cref{sec:examples}). In markets with budget constraints there is an additional tradeoff: It is well-known in classic economic theory that competitive equilibria need not achieve utilitarian efficiency. 
Still, it is natural to consider a notion of the best ``achievable'' social welfare in the presence of budgets.
For instance, \citet{Che-2013} develop a mechanism for a simple market%
\footnote{\citet{Che-2013} characterize the socially optimal mechanism that achieves an efficient assignment of a homogeneous supply of a single good to a finite mass of budget-constrained agents.}
that maximizes expected social welfare, but respects the agents' (private) budget constraints, using a random assignment of supply and/or cash subsidies. Similarly, seller-optimal mechanisms are described in \citet{Che-2000} and \citet{Pai-2014} with implementations as non-linear pricing schemes and modified all-pay auctions, respectively.

Two important applications of our model are the digital advertising economy and markets for the exchange of financial assets. In these multi-product environments, the mechanisms described above are not immediately applicable. Moreover, given the fast-paced and large-scale nature of online ad auctions, which occur within fractions of a second, their implementation appears to be too complex. In exchanges for financial assets (\citet{Klemperer2018}, see also \cref{sec:arctic}), the seller may have limited control over resale, so randomized allocations and cash transfers may attract speculators. Instead, we consider a seller who is restricted to a price-only mechanism. 

In our market, the seller supplies multiple divisible goods in finite quantities at zero cost to multiple buyers. Each buyer has monotone and concave, possibly linear, valuations and quasi-linear utility, and is endowed with a finite budget of money. In our price-only mechanism, the seller chooses uniform prices that, anticipating the buyers' demand correspondences, respect budgets and market supply. The seller may choose competitive prices that clear the market, or she may retain some of her supply and raise prices to maximise revenue.

We consider the notion of \emph{constrained social welfare}, i.e.,~social welfare among all allocations and prices%
\footnote{Any feasible allocation is inherently tied to prices in our market, as feasibility must respect the buyers' budgets.} that respect the buyers' demands, budgets, and market supply.
When buyers have quasi-linear preferences, constrained social welfare is maximised in a competitive equilibrium, and in a competitive equilibrium only (\cref{proposition:welfare}).%
\footnote{This theorem continues to hold if the seller has quasi-linear utility with non-zero costs.}
We also say that a competitive equilibrium is \emph{constrained utilitarian efficient}, or simply \textit{constrained efficient}. As is standard, a competitive equilibrium is Pareto efficient, but not necessarily utilitarian efficient (see also \cref{sec:examples}).

Our main result establishes that when buyers have linear valuations, the unique competitive equilibrium also maximises the seller's revenue in a price-only mechanism (\cref{theorem:coincidence}). The unique market-clearing prices are buyer-optimal among all revenue-maximising prices, as they maximise the quantities allocated to each buyer. However, there exist revenue-maximising outcomes that are not the competitive equilibrium, and therefore are not constrained utilitarian efficient. We prove our result by studying the \emph{feasible region}, defined as the set of prices at which, for every good, either the market clears (respecting buyers' budgets and demands) or there is excess supply. We show that this non-convex region has elementwise-smallest prices, and that these prices clear the market as well as maximise the seller's revenue.

The notion of constrained utilitarian efficiency provides a compelling and practical benchmark for social optimality in markets with budgets, where competitive equilibrium allocations need not be classically utilitarian efficient. While competitive equilibria in our market remain Pareto efficient, welfare considerations require the stronger criterion of utilitarian efficiency. Studying the constrained social optimum is especially important in settings in which it would be hard to accept breaking or circumventing budget constraints for the sake of efficiency. This is the case in our two motivating examples, digital advertising, and financial asset exchanges. In an ad auction, although a small business may derive a high value from an ad placement, only their \emph{ability to pay} can be relevant to a for-profit digital platform. Similarly, in the exchange of financial assets, the budget of a buyer represents the limit on the nominal value of their asset that is to be exchanged, and thus the seller would not want to grant an allocation of substitute assets exceeding this limit.\footnote{In general, the analysis of social welfare that transcends buyers' budget constraints may be more relevant when these budget constraints are linked to individuals with unequal endowments or socially indispensable businesses with unequal access to capital markets.}


Our market has been studied as a simple model of many ad auctions as they occur on digital market platforms \citep{Conitzer-2022}, and is also called a mixed Fisher or quasi-Fisher market \citep{Chen2007,Murray2020}. When businesses compete for digital advertising space, the decision of which publisher (product) to bid for is non-trivial. It is intuitive to choose an advertising budget and state demand in terms of ``limit market prices'' for multiple, distinct products. The seller or platform assigns to each buyer those products that yield the highest value for money to them. Our main result suggests that the platform can set prices that are both revenue maximizing and socially optimal within the feasibility constraints of advertiser budgets. However, it does not imply that the seller always sets socially optimal prices.

Budget-constrained buyers also appear in exchanges for financial assets. For example, \citet{Klemperer2018} introduces the ``arctic auction'', originally developed for the government of Iceland, who planned to use this auction to exchange blocked accounts for other financial assets such as cash or bonds. Buyers could submit a budget and their trade-offs between different assets, and the auction was solved to maximise the seller's revenue. Quasi-Fisher markets can be interpreted as a special case of this auction, as we discuss in \cref{sec:arctic}. Further applications include debt restructuring and the (re-)division of firms between shareholders (see also \citet{Klemperer2018,Baldwin2024language}).

\subsection{Related Literature}
\label{sec:literature}

The practical relevance of our market is highlighted in \citet{Conitzer-2022}, whose setting is particularly inspired by online ad auctions. The basic properties of the market and budget-constrained buyers are identical to ours, although they consider only linear valuations. Contrasting our setting, each divisible good is sold in an independent, single-unit first-price auction, in which only the highest bidders can win a positive quantity. \citet{Conitzer-2022} introduce the solution concept of \emph{first price pacing equilibria} (FPPE), in which the submitted bids correspond to the buyers' values scaled (uniformly for each buyer) by a pacing multiplier.\footnote{An FPPE is defined as a set of pacing multipliers (one for each buyer) and allocations that satisfy the allocation and pricing rule of standard first-price auctions, as well as budget feasibility, supply feasibility, market clearing for demanded goods, and `no unnecessary pacing', i.e.~a buyer's multiplier equals one if she has unspent budget.}
Interestingly, this at first sight unrelated auction procedure can also be solved using the modified Eisenberg-Gale convex programme of \citet{Chen2007} that was proposed for mixed Fisher markets. Moreover, \citet{Conitzer-2022} show that the unique FPPE corresponds to a competitive equilibrium in the sense of our setting; that is, in the overarching market for all goods with budget-constrained, quasi-linear buyers. While they show that the FPPE is revenue-maximal among all budget-feasible pacing multipliers and corresponding allocations, our work implies that the FPPE is indeed revenue-maximising in the entire market.

Several papers have studied optimal mechanism and auction design in the presence of budget constraints. A crucial distinction between those mechanisms and ours is the focus on incentives that arise from the presence and extent of private information. For example, \cite{Laffont-1996} characterize the optimal auction, and \cite{Maskin-2000} designed the constrained-efficient mechanism when budgets are known. The optimal and constrained-efficient mechanism in different settings with private budgets and values, including a single buyer, multiple buyers, or a population of buyers, are developed, for example, in \cite{Che-2000,Che-2013,Pai-2014,Richter-2019}. In contrast to this literature, our approach is more practical, focusing on a market with a supply of differentiated goods in which the seller is restricted to a price-only mechanism (see also the discussion in \cref{sec:introduction}), and buyers behave non-strategically.

The market with budget-constrained buyers with quasi-linear utilities and linear valuations is also known as a quasi-Fisher market \citep{Murray2020}, as it can be considered a generalisation of standard Fisher markets \citep{Brainard-Scarf2005}.
Quasi-Fisher markets have appeared in various guises, mainly in a computational context. The first results on quasi-linear Fisher markets were developed by \citet{Chen2007}, who showed that competitive equilibria can be computed in polynomial time, with several others to follow.%
\footnote{A more comprehensive overview of computational contributions on linear and quasi-linear Fisher markets is given in our preliminary working paper \citep{finster2023substitutes}.}
In standard, linear Fisher markets, buyers spend their entire budget at any market prices, and so revenue is constant at all prices. In contrast, buyers with quasi-linear utility and budget constraints spend nothing when prices are unacceptably high. Hence, the notion of maximising revenue becomes a viable objective for the seller to pursue. Contrasting the literature on quasi-Fisher markets, our paper considers not only competitive equilibrium, but also maximising revenue. Our unifying result on constrained utilitarian efficiency, competitive equilibrium, and revenue demonstrates the importance of the quasi-linear setting from a theoretical perspective as well as in applications.

Our market is also equivalent to a version of the 
Arctic Product-Mix Auction (PMA) \citep{Klemperer2018}, in which the seller's costs are zero. The Arctic PMA was originally designed for the government of Iceland in order to allow to exchange offshore accounts for other financial assets. \citet{fichtl2022computing} studies this auction from the perspective of computing revenue-maximising prices.

\paragraph*{Organisation.} The remainder of the paper is structured as follows. In \cref{sec:examples}, we motivate the research with several examples in which buyers' values are not linear, and the coincidence of revenue-optimality and constrained utilitarian efficiency fails. \cref{sec:markets-preferences-objectives} describes the model, and \cref{section:market-outcomes} defines market outcomes and their properties. In \cref{sec:constrained-efficiency}, we prove the constrained efficiency of competitive equilibrium, and in \cref{section:revenue-welfare-coincidence}, we show the coincidence of constrained efficiency and revenue-optimality in the linear case. \cref{sec:arctic} establishes the connection between our market and the Arctic auction, and \cref{sec:conclusion} concludes.

\paragraph*{Notation.}
For any two vectors $\vec{v}, \vec{w} \in \mathbb{R}^n$, we write $\vec{v} \cdot \vec{w}$ for their dot product, and ${\vec{v} \leq \vec{w}}$ when the inequality holds elementwise. For any $j \in \{1, \ldots, n\}$, $\eb^j$ denotes the $n$-dimensional indicator vector with $e^j_{j} = 1$ and $e^j_k = 0$ for all $k \neq j$. We also define $\eb^0 \coloneqq \bm{0}$.

\section{Examples: Efficiency, Constrained Efficiency, and Revenue}
\label{sec:examples}

We illustrate the objectives of maximizing social welfare, constrained social welfare, and the seller's revenue. \Cref{example:no-efficiency} shows that competitive equilibrium with budget constraints, which is Pareto efficient, is also constrained utilitarian efficient, but not necessarily fully utilitarian efficient. In \cref{example:concave-values-1}, we show that with diminishing marginal values, the seller's revenue is not maximised in a constrained-efficient allocation. \Cref{example:two} illustrates our main result (\cref{theorem:coincidence}), the coincidence of competitive equilibrium and revenue-optimality, in a setting with two goods and constant marginal values. Note that allocations are non-negative throughout the paper.
\begin{example}
\label{example:no-efficiency}
Consider a market with two buyers and a seller with one good in unit supply and zero costs. The buyers' per-unit values are $v^1 > v^2$ and their budgets are $\beta^1 > \beta^2$, and we assume $\beta^1 + \beta^2 \leq v^2$. The fully efficient (social welfare maximizing) allocation gives the entire unit to the first buyer.
However, due to their budgets, for any price $p\leq v^2$, each buyer $i$ demands quantity $\frac{b^i}{p}$. To clear the market, the price must satisfy $\frac{\beta^1}{p} + \frac{\beta^2}{p} = 1$, i.e.,~$p = \beta^1 + \beta^2$. The allocation to buyer 1 is therefore $\frac{\beta^1}{\beta^1 + \beta^2} < 1$.
\end{example}
The competitive equilibrium in \cref{example:no-efficiency} is not fully efficient, but it maximises constrained social welfare that respects the buyers' budgets. Budgets intrinsically link constrained welfare considerations to prices. We say that an outcome, consisting of an allocation and prices, is feasible if it respects the buyers' demands, budgets, and the supply constraint. The constrained-efficient outcome is defined as maximizing social welfare among all feasible outcomes. In the example, social welfare is maximised by assigning as much as possible to buyer 1. However, if we cannot violate buyers' budgets, increasing the quantity allocated to the first buyer beyond $\frac{\beta^1}{\beta^1 + \beta^2}$ requires decreasing the price. This would again increase the demand of the second buyer and violate the supply constraint, tipping the market out of equilibrium. Thus, the competitive equilibrium is constrained utilitarian efficient. The seller's revenue in the competitive equilibrium is $\beta^1 + \beta^2$. This is clearly optimal, as the seller cannot hope to obtain more money than is available in the market.

Although competitive equilibrium is always constrained efficient, with any number of goods, buyers, and irrespective of the valuation, the coincidence of revenue-optimality and constrained efficiency is more fragile. The following example shows that, with diminishing marginal values, it may fail.

\begin{example}
\label{example:concave-values-1}
The seller provides a single good with supply $s\in \R_+$ at zero costs. There is one buyer with a continuous valuation $v:\R\to\R$ for the good and quasi-linear utility $u(p,x) = v(x) - px$, where $p$ denotes the unit price of the good, and a budget $\beta\in \R_+\cup\{\infty\}$.  We aim to find outcomes $(p,x)$ that maximise either constrained efficiency or the seller's revenue in a price-only mechanism, respectively.
\end{example}

Maximising revenue with a price-only mechanism, the seller sets a price in anticipation of the buyer's demand. The buyer demands the quantity $x$ at price $p$ that maximises $v(x) - px$ subject to not exceeding the budget, $px \leq \beta$. Thus, the seller's revenue is maximised at price $p$ and allocation $x \leq s$ that maximise $px$ and the buyer demands $x$ at $p$.
Social welfare, on the other hand, is highest at price $p$ and an allocation $x \leq s$ that maximises $v(x)$ among all $(p, x)$ for which the buyer demands $x$ at $p$.

To derive allocations and prices, we first have to make some assumptions on the buyer's valuation. A typical assumption is diminishing marginal values. In that case, the buyer always demands a quantity such that their marginal utility at this quantity is zero. Assuming $v'(s)\cdot s \leq \beta$, social welfare is maximised if $v'(x) = p$ and $x = s$, and markets clear. However, if the seller were allowed to adjust the price, anticipating the buyer's demand, she might be able to extract more revenue (and not sell the entire supply). The following proposition demonstrates that this is indeed the case for all strongly concave value functions if the buyer's budget is large.
\begin{proposition}
\label{prop:concave-v}
    Let $v$ be differentiable and strongly concave with parameter $m$ for some $m>0$, and let $\beta = \infty$. Then there exists some supply $s \in \R$ so that revenue is not maximised at the market-clearing price.
\end{proposition}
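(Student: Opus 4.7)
The plan is to exploit the fact that strong concavity forces $v'$ to decrease linearly, so market-clearing revenue diverges to $-\infty$ as supply grows, while the seller can always guarantee a fixed positive revenue by naming a price above the market-clearing price.

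First, since $\beta = \infty$ and $v$ is differentiable and strongly concave, the buyer's maximiser of $v(x) - px$ is characterised by the first-order condition $v'(x) = p$ (uniquely, by strong concavity), so the demand function is $x(p) = (v')^{-1}(p)$ for $p \leq v'(0)$ and $x(p) = 0$ otherwise. At supply $s$ the market-clearing price is therefore $p^*(s) = v'(s)$, giving market-clearing revenue $R^*(s) = s \cdot v'(s)$.

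Next, I would invoke the standard consequence of strong concavity that $v'(y) \leq v'(x) - m(y - x)$ for all $y \geq x$. Setting $x = 0$ gives the global bound $v'(s) \leq v'(0) - m s$, hence $R^*(s) \leq s \cdot v'(0) - m s^2 \to -\infty$ as $s \to \infty$. Assuming $v'(0) > 0$ (the interesting case; otherwise the buyer demands nothing at any non-negative price and the statement holds trivially, since the seller obtains at least zero revenue by pricing above $v'(0)$), fix any $p_0 \in (0, v'(0))$, yielding a supply-independent positive demand $x_0 \coloneqq (v')^{-1}(p_0) > 0$ and revenue $p_0 \cdot x_0 > 0$ at that price. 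For all sufficiently large $s$ satisfying $s \geq x_0$ and $R^*(s) < 0$, the seller's choice of price $p_0 > p^*(s)$ is feasible and strictly beats $R^*(s)$, completing the argument.

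The only non-routine step is deriving the linear upper bound $v'(s) \leq v'(0) - m s$ from the defining inequality of strong concavity without appealing to $v''$ (which need not exist under the hypothesis of mere differentiability). This is a standard short computation: summing the two-point strong concavity inequality $v(y) \leq v(x) + v'(x)(y-x) - \tfrac{m}{2}(y-x)^2$ with the same inequality after swapping $x$ and $y$ gives $(v'(x) - v'(y))(y - x) \geq m(y-x)^2$, and dividing by $(y-x) > 0$ yields the required bound. Everything else in the argument is a direct comparison between a revenue that diverges to $-\infty$ and a fixed strictly positive revenue.
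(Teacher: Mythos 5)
Your overall strategy---using the strong-concavity gradient bound $v'(s) \le v'(0) - ms$ to drive the market-clearing revenue $s\,v'(s)$ below the fixed positive revenue $p_0 x_0$ available at a price $p_0 \in (0, v'(0))$---is close in spirit to the paper's argument, which exploits the same inequality in the form $v'(s-\varepsilon) \ge v'(s) + m\varepsilon$ (the paper perturbs locally, showing $v'(s-\varepsilon)(s-\varepsilon) > v'(s)s$, rather than comparing against a fixed benchmark price). However, there is a genuine gap in the regime you select. You require $s$ large enough that $R^*(s) = s\,v'(s) < 0$, which forces $v'(s) < 0$. But prices in this model are non-negative, and at any price $p \ge 0$ the buyer demands at most $(v')^{-1}(0)$ units; hence for any supply with $v'(s) < 0$ there is \emph{no} market-clearing price at all---the quantity $p^*(s) = v'(s)$ you write down is negative and is not an admissible price. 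Every supply to which your comparison applies is therefore one at which the proposition's premise (a market-clearing price that could fail to be revenue-optimal) is empty, so the argument never exhibits a supply at which a genuine clearing price is revenue-dominated.

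The repair is short and uses only ingredients you already have. Since $v'$ is a derivative it has the intermediate value property, it is strictly decreasing by strong concavity, $v'(0) > 0$, and $v'(x) \le v'(0) - mx < 0$ for large $x$; hence there is a finite $s_{\max} \le v'(0)/m$ with $v'(s_{\max}) = 0$. Take $s = s_{\max}$: the market-clearing price is $0$ with revenue $0$, whereas any $p_0 \in (0, v'(0))$ induces demand $x_0 = (v')^{-1}(p_0) \in (0, s_{\max})$, which respects supply and yields revenue $p_0 x_0 > 0$. (Alternatively, take $s$ slightly below $s_{\max}$ so that the clearing price is strictly positive while the clearing revenue $s\,v'(s)$ is still below $p_0 x_0$.) Note that this finiteness of $s_{\max}$ is exactly where strong concavity earns its keep---for merely strictly concave $v$ such as $v(x) = \log(1+x)$ the clearing price is revenue-optimal for every supply---so the fix is not cosmetic. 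With that substitution, the rest of your write-up, including the derivation of the gradient inequality from the two-point definition of strong concavity without assuming $v''$ exists, is correct.
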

When the buyer has a finite budget, we assume that supply lies in the finite interval $X:=[0, \max\{x \mid x v'(x) \leq \beta \}]$. We define $\Tilde{x}$ implicitly by $\Tilde{x}v'(\Tilde{x}) = \beta$.

\begin{proposition}
\label{prop:concave-v-2}
Suppose the buyer has a strongly concave valuation $v$ with parameter $m$ and a finite budget~$\beta$. If supply $s$ is contained in $X$ with $v'(s) < ms$, or if $m > \frac{v'(\Tilde{x})}{\tilde{x}}$, then revenue is not maximised at the market-clearing price.
\end{proposition}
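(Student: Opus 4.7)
The plan is a first-order marginal-revenue argument: compute the right-derivative of the seller's revenue function at the market-clearing price $p^{*}=v'(s)$ and show it is strictly positive, so that the seller can strictly improve revenue by raising the price slightly above $p^{*}$. To set up, note that the buyer's demand at a posted price $p$ is the unconstrained optimum $x^{*}(p)=(v')^{-1}(p)$ whenever $p\,x^{*}(p)\leq\beta$, and otherwise is the budget-binding quantity $\beta/p$. At $p^{*}$ we have $x^{*}(p^{*})=s$ and, since $s\in X$, the budget is not violated, so revenue is $R(p^{*})=sv'(s)$.

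Under condition (a), I would work in the unconstrained regime. By the inverse function theorem $\tfrac{dx^{*}}{dp}=1/v''(x^{*})$, so
\begin{equation*}
\frac{d}{dp}\bigl(p\,x^{*}(p)\bigr)\Big|_{p=p^{*}}=s+\frac{v'(s)}{v''(s)}\geq s-\frac{v'(s)}{m},
\end{equation*}
using $v''(s)\leq -m$ from strong concavity. The hypothesis $v'(s)<ms$ makes this lower bound strictly positive. Assuming $sv'(s)<\beta$ (i.e.\ $s$ strictly interior to $X$), continuity ensures $p\,x^{*}(p)<\beta$ for $p$ in a right neighborhood of $p^{*}$, so the budget remains slack there, actual demand equals $x^{*}(p)$, and actual revenue satisfies $R(p)=p\,x^{*}(p)>R(p^{*})$. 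Hence market-clearing is not revenue-maximising.

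For condition (b), observe that $m>v'(\tilde{x})/\tilde{x}$ is exactly $v'(\tilde{x})<m\tilde{x}$, i.e.\ condition (a) evaluated at $\tilde{x}$. Strong concavity then gives
\begin{equation*}
\frac{d}{dx}\bigl(xv'(x)\bigr)\Big|_{\tilde{x}}=v'(\tilde{x})+\tilde{x}v''(\tilde{x})\leq v'(\tilde{x})-m\tilde{x}<0,
\end{equation*}
so $xv'(x)$ is strictly decreasing at $\tilde{x}$. Consequently, for $s$ slightly larger than $\tilde{x}$ the quantity $sv'(s)$ drops strictly below $\beta$, keeping $s\in X$ with budget slack at market-clearing; and since $v'$ is decreasing, $v'(s)<v'(\tilde{x})<m\tilde{x}<ms$ preserves the hypothesis of (a) on this nearby supply. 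The previous argument then delivers the conclusion.

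The main obstacle is correctly reconciling the first-order derivative computation with the budget constraint. The derivative gain I compute lives in the unconstrained regime, but at the boundary case $s=\tilde{x}$ (where $sv'(s)=\beta$) any upward perturbation in $p$ immediately causes the budget to bind, so revenue is capped at $\beta$ and the derivative improvement is not realised. Condition (b) is precisely the device that lets me step from this delicate boundary to an interior supply $s>\tilde{x}$ where the budget is strictly slack and the marginal argument applies cleanly; verifying that condition (a) is preserved under this small perturbation is the one place where strong concavity (rather than mere concavity) is essential.
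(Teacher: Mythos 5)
Your overall strategy is sound and, at the level of structure, parallels the paper's: condition (a) is handled by showing revenue strictly improves when the seller moves away from market clearing under $v'(s)<ms$, and condition (b) is reduced to condition (a) by exhibiting a supply near $\tilde{x}$ at which (a)'s hypothesis holds with budget slack. The execution differs. The paper argues by a discrete quantity perturbation: it sets $\varepsilon=\tfrac{1}{2}(ms-v'(s))$ and uses the strong-concavity inequality $v'(s-\varepsilon)\geq v'(s)+m\varepsilon$ to get $v'(s-\varepsilon)(s-\varepsilon)>v'(s)s$ directly, with no derivatives of $v'$; for (b) it moves to $s=\tilde{x}-\epsilon$. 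You instead differentiate revenue $p\,x^{*}(p)$ at $p^{*}=v'(s)$ via the inverse function theorem, and for (b) you move to $s=\tilde{x}+\epsilon$. Your direction in (b) is arguably the more defensible one, since it guarantees $sv'(s)<\beta$ so the budget is genuinely slack where the marginal argument is applied, whereas just below $\tilde{x}$ the product $xv'(x)$ may exceed $\beta$ under hypothesis (b). You are also right to flag the boundary case $sv'(s)=\beta$, where market-clearing revenue already equals $\beta$ and cannot be improved; the interiority assumption you add is one the paper's own proof silently needs as well.

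The one genuine gap is your reliance on $v''$. The paper defines strong concavity for a once-differentiable $v$ by $|v'(x)-v'(y)|\geq m|x-y|$ and explicitly notes that $v$ need not be twice differentiable, so the steps $\tfrac{dx^{*}}{dp}=1/v''(x^{*})$ with $v''(s)\leq -m$, and $\tfrac{d}{dx}\bigl(xv'(x)\bigr)\big|_{\tilde{x}}=v'(\tilde{x})+\tilde{x}v''(\tilde{x})$ in part (b), are not licensed in that generality. Both are repairable by difference quotients using the defining inequality: for (a), $v'(s-\varepsilon)(s-\varepsilon)-v'(s)s\geq \varepsilon\bigl(ms-v'(s)-m\varepsilon\bigr)>0$ for small $\varepsilon>0$, which is exactly the paper's estimate; for (b), with $s=\tilde{x}+\varepsilon$ one gets $sv'(s)-\beta\leq (s-\tilde{x})\bigl(v'(\tilde{x})-ms\bigr)<0$ and $v'(s)<v'(\tilde{x})<m\tilde{x}<ms$ with no second derivative. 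With those substitutions your argument goes through.
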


In other words, for any strongly concave valuation, we can find a combination of budget and supply such that the maximisers of the constrained social welfare and the revenue maximisation problem do not coincide. For example, we may require the valuation to be sufficiently concave relative to the budget.
The arguments of the above propositions are standard in monopoly theory, but for completeness we provide the proofs in the appendix. \Cref{example:concave-no-budget} in the appendix further illustrates the case with large budgets.


In light of the above propositions, we consider the class of constant marginal values, i.e., $v(x) = vx$ for some scalar per-unit valuation $v$.
\begin{proposition}\label{prop:concave-v-3}
Suppose the buyer has valuation $v(x) = vx$ and budget $\beta$. Then the seller's revenue is maximised at market-clearing prices.
\end{proposition}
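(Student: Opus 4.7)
The plan is to prove the claim by a direct upper-bound argument: I characterise the buyer's demand correspondence at every price $p$, derive the upper bound $\min(\beta, vs)$ on the seller's revenue at any feasible outcome, and then exhibit a market-clearing outcome that attains this bound. Once these pieces are in place, the conclusion is immediate.

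First, I would write down the demand. Maximising $(v-p)x$ subject to $x \geq 0$ and $px \leq \beta$ yields: demand $\{0\}$ when $p > v$; the singleton $\{\beta/p\}$ when $0 < p < v$, since utility is strictly increasing in $x$ and the budget binds; and the interval $[0, \beta/v]$ when $p = v$, since the buyer is indifferent across all budget-feasible quantities. (The case $p=0$ yields revenue $0$ and can be discarded.) Any feasible outcome $(p, x)$ must satisfy $x \leq s$, $px \leq \beta$, and $x$ lies in the buyer's demand at $p$; the last condition forces $p \leq v$ whenever $x > 0$. Combining $x \leq s$ with $p \leq v$ gives $px \leq vs$, and combining with $px \leq \beta$ gives revenue $px \leq \min(\beta, vs)$.

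Second, I would exhibit a market-clearing outcome achieving this bound, splitting into two cases. If $\beta \geq vs$, then $s \leq \beta/v$, so $(p, x) = (v, s)$ lies in the demand correspondence, clears the market, and yields revenue $vs = \min(\beta, vs)$. If instead $\beta < vs$, then setting $p = \beta/s < v$ makes $\beta/p = s$, so $(\beta/s, s)$ is in the demand set, clears the market, and yields revenue $\beta = \min(\beta, vs)$. In either case the market-clearing price attains the revenue upper bound and is therefore revenue-maximising.

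The argument is short and presents no real obstacle; the only point requiring slight care is the boundary $p = v$, where the demand correspondence is multi-valued, and correspondingly the split between the budget-binding case $\beta < vs$ and the supply-binding case $\beta \geq vs$. The structural reason the coincidence holds is that a linear valuation makes the buyer's marginal willingness to pay constant, so she either exhausts her budget or consumes the whole supply exactly at her marginal valuation; in both situations the market-clearing price simultaneously maximises the product $px$.
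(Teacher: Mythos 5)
Your proof is correct and follows essentially the same route as the paper's: both identify the market-clearing price $p=\min\{v,\beta/s\}$ and split into the supply-binding case ($\beta\geq vs$, price $v$) and the budget-binding case ($\beta<vs$, price $\beta/s$), showing no feasible outcome can yield more revenue. Your explicit upper bound $px\leq\min(\beta,vs)$ is a slightly tidier packaging of the paper's observation that higher prices yield zero revenue or are capped by the budget, but the substance is identical.
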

The above proposition is straightforward. Constrained social welfare is maximised at $p = \min\{v, \frac{\beta}{s}\}$ and $x = s$. We argue that the constrained efficient allocation and price are also revenue-optimal, because the seller extracts the maximum possible revenue at $\pb$. Suppose first that $p = v \leq \frac{\beta}{s}$. At prices strictly above $v$, the buyer demands nothing, so the seller gets a revenue of zero. Prices strictly below $v$ are not feasible, as the buyer demands more than one unit of the good. In the case that $p = \frac{\beta}{s} < v$, the buyer exhausts their entire budget demanding $x = s$, so the seller cannot extract more revenue.

An immediate question to ask is whether this reasoning extends to more general environments and preferences. The answer we present in this paper is affirmative: if any number of buyers have quasi-linear, budget-constrained utility and linear values for any number of goods, then seller-optimal revenue and constrained efficiency are attained at the unique set of elementwise-minimal prices. This result, however, is not immediate. In the following, we illustrate the difficulty in another simple example with two goods.

\begin{figure}[htp]
    \centering
    \begin{tikzpicture}[scale=0.8]
        \begin{axis}[
        axis lines=middle,
        axis line style={->},
        x label style={anchor=west},
        y label style={anchor=south},
        xlabel={$p_A$}, ylabel={$p_B$},
        xtick = {1,2,3,4},
        ytick = {1,2,3},
        xmin=0, xmax=5,
        ymin=0, ymax=3.5,
        grid=major,
        grid style=dashed,
        clip=false,
        ]
        \fill[gray,opacity=0.2](5,3.5)--(5,1.5)--(3,1.5)--(2,1)--(1,1)--(0.667,0.667)--(0.667,1)--(1,1.5)--(1,3.5)--cycle;
        \node[posbid, label=25:$\vb^1$] at (2,3) {};
        \node[posbid, label=25:$\vb^2$] at (2,2) {};
        \node[posbid, label=25:$\vb^3$] at (4,2) {};
        \draw[margin](0,0) -- (2,3);
        \draw[margin](2,3)--(5,3);
        \draw[margin](0,0)--(2,2)--(2,3.5);
        \draw[margin](2,2)--(5,2);
        \draw[margin](0,0)--(4,2)--(4,3.5);\draw(4,2)--(5,2);
        \draw[boundary](1,3.5)--(1,1.5)--(0.667,1)--(0.667,0.667)--(0.6,0.6);
        \draw[boundary](0.667,0.667)--(1,1)--(2,1)--(3,1.5)--(5,1.5);
        \node[red] at(0.6,0.6){\small$\bullet$};
        \node[blue] at(0.667,0.667){\small$\bullet$};
        \end{axis}
    \end{tikzpicture}

    \caption{The feasible region in price space corresponding to \cref{example:two}.}
    \label{fig:example-two}
\end{figure}

\begin{example}
\label{example:two}
Two goods, $A$ and $B$, are for sale with a supply of $s_A = 3$ and $s_B = 2$. There are three buyers $1,2$, and $3$ with the following marginal values: $\vb^1 = (v^1_A, v^1_B) = (2,3)$, $\vb^2 =(v^2_A, v^2_B) = (2,2)$, and $\vb^3 =(v^3_A, v^3_B) = (4,2)$. Utilities are quasi-linear, i.e.~$u^i(x,p) = \sum_j (v^i_j - p_j)x_j$ for buyer $i$. Each buyer has a budget of $\beta^1 = \beta^2 = \beta^3 = 1$.
\end{example}

In this example market, we can allocate the three units of $A$ and 2 units of $B$ among the three buyers to maximise either revenue or social welfare, but need to respect individual demand. It is not hard to check that at given prices $(p_A,p_B)$ each buyer $i$ will demand a good $j \in \argmax_{j=1,2}\frac{v^i_j}{p_j}$ if $v^i_j \geq p_j$. Any good $k$ with $v^i_k < p_k$ will never be demanded by buyer $i$. This kind of individual demand can be easily represented in price space (more detail in \cref{sec:arctic}). At some prices, aggregate demand is too large to be satisfied by supply. Prices at which aggregate demand does \emph{not} exceed supply are called \emph{feasible}. The set of feasible prices makes up the \emph{feasible region}. The bids (black dots) and the feasible region (in grey) for \cref{example:two} are illustrated in \cref{fig:example-two}. Note that the feasible region also includes a short line segment between $\pb^*:=(\frac{3}{5},\frac{3}{5})$ (red dot) and $\pb':=(\frac{2}{3},\frac{2}{3})$ (blue dot). The feasible region has the key property that for any pair of feasible price vectors, their elementwise minimum also belongs to the region (\cref{proposition:elementwise-minimal-prices}).

At prices $(\frac{3}{5},\frac{3}{5})$, buyer 1 demands $\frac{5}{3}$ of $B$, buyer 3 demands $\frac{5}{3}$ of $A$, and buyer 2 demands $ x^2_A\in[0,\frac{5}{3}]$ copies of $A$ and $\frac{5}{3}-x^2_A$ of $B$. With supply $(s_A,s_B) = (3,2)$, we set $x_A^2=\frac{4}{3}$ to clear the market. It is easy to check that any prices on the line segment $[p^*,p']$ induce a feasible allocation. All prices $[p^*,p']$ are revenue-maximising. However, only $\pb^*$ clears the market.

\section{The Market}
\label{sec:markets-preferences-objectives}
We have $m$ buyers $[m] \coloneqq \{1, \ldots, m \}$, $n$ divisible goods $[n] := \{1, \ldots, n\}$, and a (divisible) numeraire, denoted $0$, that we call ``money''. Let $[n]_0 = \{0, \ldots, n\}$ be the goods together with money. A \emph{bundle}, typically denoted $\xb$ or $\yb$, is an $n$-dimensional vector of non-negative reals whose entry $x_j \geq 0$ for each $j \in [n]$ denotes the {quantity} of good~$j$.
The seller (``she'') is endowed with a \textit{supply bundle} $\vec{s} \in \Rn_+$ that she wishes to sell, partially or completely, by setting uniform, non-negative market prices $\pb \in \Rn_+$ for the goods. She has zero costs and no utility for leftover supply,%
\footnote{This assumption is valid, e.g., in ad auctions. Unsold advertising slots generally have no utility to the seller (unless it was using the slots for advertising their own products). Some of our results extend to non-zero seller costs.}
so her quasi-linear utility for selling bundle $\xb$ at prices $\pb$ is given by $\pb \cdot \xb$.

Each buyer $i \in [m]$ (``he'') has a \emph{budget} $\beta^i$ and a \emph{valuation} $v^i$ mapping every bundle $\xb$ to a value $v^i(\xb) \in \R_+$. We assume that $v^i$ is concave and monotone increasing.
Buyers have \textit{quasi-linear utility} $u^{i}(\xb, \pb) \coloneqq v^i(\xb) + \beta^i - \pb \cdot \xb$ from receiving bundle $\xb$ at prices $\pb$, where $\beta^i - \pb \cdot \xb$ is their leftover money. A buyer's \textit{demand} at prices $\pb$ consists of the bundles $\xb$ that maximise his utility $u^{i}(\cdot, \pb)$, subject to not exceeding his budget. The latter is expressed by the budget constraint $\pb \cdot \xb \leq \beta^{i}$. This leads to the budget-constrained \textit{demand correspondence}
\[
    D^i(\pb) \coloneqq \argmax_{\xb \in \Rn_+, \pb \cdot \xb \leq \beta^{i}} (v^i(\xb) - \pb \cdot \xb),
\]
omitting the constant $\beta^i$.
In \cref{section:revenue-welfare-coincidence,sec:arctic}, we consider the setting in which buyers have linear valuations $v^i(\xb) = \vb^i \cdot \xb$ given by a vector $\vb^i \in \Rn_+$. This market is also called a mixed Fisher or quasi-Fisher market \citep{Chen2007,Murray2020}.%
\footnote{Quasi-Fisher markets differ from classical linear Fisher markets in that buyers in the latter market have linear (as opposed to quasilinear) utilities $u^i(\xb) = \vb^i \cdot \xb$. So their demand correspondences are given by $D^i(\pb) = \argmax_{\xb \in \R_+^n, \pb \cdot \xb \leq \beta^i} \vb^i \cdot \xb$.}

\subsection{Market Outcomes}
\label{section:market-outcomes}
The seller solves the auction by determining a market \textit{outcome} $(\pb, (\xb^i)_{i \in [m]})$, which consists of \textit{market prices} $\pb$ and an \textit{allocation bundle} $\xb^i$ to each buyer $i \in [m]$. For brevity, we drop the subscript $i \in [m]$ when denoting an allocation $(\xb^i)$ to buyers, as buyers are fixed throughout.
When determining a market outcome, the seller wishes not to exceed supply. Allocations are also restricted to bundles that reflect buyers' demands and budgets at the chosen market prices.
In particular, each buyer $i$ receives a bundle $\xb^i$ that maximises his utility $u^i(\yb, \pb)$ among all bundles $\yb$ that do not cause his to spend an amount $\pb \cdot \yb$ exceeding his budget $\beta^i$.
We call such outcomes \textit{feasible}.

\begin{definition}
\label{definition:feasible-outcome}
A market outcome $(\pb, (\xb^i))$ is \textit{feasible} if:
\begin{enumerate}[label=(\roman*)]
\item the aggregate allocation of each good $j$ does not exceed its supply, so $\sum_{i \in [m]} \xb^i \leq \ssb$;
\item each buyer demands his allocation at $\pb$, so $\xb^i \in D^{i}(\pb)$ for all $i \in [m]$.
\end{enumerate}
\end{definition}

Prices are \textit{feasible} if they can be extended to a feasible outcome with some allocation. In \cref{section:revenue-welfare-coincidence}, we will study the geometry of the feasible region, which consists of the set of all feasible prices. Conversely, we say that an allocation $(\xb^i)$ is \textit{supported} by some prices $\pb$ if $(\pb, (\xb^i))$ form a feasible outcome. This means that the allocation can be implemented by some prices~$\pb$ at which each buyer $i$ demands bundle $\xb^i$.

\begin{definition}
\label{definition:feasible-region}
We say that prices $\pb$ are \emph{feasible} if there exists an allocation $(\xb^i)$ so that $(\pb, (\xb^i))$ is a feasible outcome. The \emph{feasible region} consists of all feasible prices.
\end{definition}

The seller's \textit{revenue} from an outcome $(\pb, (\xb^i))$ is given by $\sum_{i \in [m]} \pb \cdot \xb^i$. In order to maximise revenue, the seller may set prices at which her supply is not cleared.

However, long-term considerations such as client satisfaction and participation may also lead the seller to consider competitive pricing. A \textit{competitive equilibrium} consist of a feasible outcome that clears the market. The existence of competitive equilibrium is guaranteed in our market, as it can be seen as an Arrow-Debreu exchange economy \citep{Arrow-Debreu1954,Chen2007}. 

\begin{definition}
A market outcome $(\pb, (\xb^i))$ is a \textit{competitive equilibrium} if it is feasible and clears the market for all positively-priced goods, so $\sum_{i \in [m]} x^i_j = s_j$ for all goods $j$ with $p_j>0$.
\end{definition}

\section{Constrained Efficiency}
\label{sec:constrained-efficiency}
The \textit{social welfare} of an allocation $(\xb^i)$ is $\sum_{i \in [m]} v^i(\xb^i)$, and the social welfare of an outcome is the welfare of its allocation. In markets with unlimited spending power, competitive equilibrium allocations are Pareto efficient and utilitarian efficient, i.e.,~maximise social welfare. In markets with budgets, however, competitive equilibrium allocations need not be utilitarian efficient, as welfare-maximising allocations may not be supported by any market prices. \Cref{example:no-efficiency} in \cref{sec:examples} illustrates.

Instead, we consider the maximum social welfare attainable by an allocation that is supported by some market prices, i.e., the maximum social welfare achievable by a market outcome. We call a feasible outcome \textit{constrained (utilitarian) efficient} if it maximises social welfare among all feasible outcomes.%
\footnote{The term `constrained efficiency' is motivated by the fact that the allocations we consider must be supported by a price to form a feasible outcome (cf.~\cref{definition:feasible-outcome}), and feasibility is constrained by budgets.}
Our notion of constrained efficiency can thus be considered a variation of efficiency in the presence of budgets.

\begin{definition}
\label{definition:constrained-efficiency}
A market outcome $(\pb, (\xb^i))$ is \textit{constrained (utilitarian) efficient} if it is feasible and maximises social welfare, $\sum_{i \in [m]} v^i(\xb^i)$, among all feasible outcomes.
\end{definition}

First, we establish that competitive equilibrium maximises constrained efficiency, and any feasible outcome maximising constrained efficiency constitutes a competitive equilibrium.\footnote{This is reminiscent of the fundamental welfare theorems, but we consider utilitarian efficiency instead of Pareto efficiency. The standard welfare theorems are known to hold in the Arrow-Debreu exchange economy, which embeds our market.}
This classic result is well-known for markets with indivisible goods without budgets (see, e.g.,~\cite{Sun2014}) and we extend this to our market in which outcomes must respect budgets, demand and supply.

\begin{proposition}
\label{proposition:welfare}
A market outcome is a competitive equilibrium iff it is constrained efficient.
\end{proposition}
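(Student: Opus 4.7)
I prove the two directions of the equivalence separately, using as the main tool the Lagrangian characterisation of each buyer's concave, budget-constrained demand problem: $\xb^i\in D^i(\pb)$ if and only if there exists $\mu^i\ge 0$ with $\mu^i(\pb\cdot\xb^i-\beta^i)=0$ such that $\xb^i$ maximises $v^i(\yb)-(1+\mu^i)\pb\cdot\yb$ over $\yb\ge 0$.

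For the forward direction, I let $(\pb^*,\xb^{*i})$ be a CE with multipliers $\mu^{*i}$ and $(\pb,\xb^i)$ any feasible outcome. Instantiating the Lagrangian optimality of $\xb^{*i}$ with $\yb=\xb^i$, summing over $i\in[m]$, and using the CE identity $\pb^*\cdot\sum_i\xb^{*i}=\pb^*\cdot\ssb$ together with complementary slackness $\mu^{*i}\pb^*\cdot\xb^{*i}=\mu^{*i}\beta^i$, I obtain
\[\textstyle\sum_i v^i(\xb^{*i})-\sum_iv^i(\xb^i)\;\ge\;\pb^*\cdot\bigl(\ssb-\sum_i\xb^i\bigr)+\sum_i\mu^{*i}\bigl(\beta^i-\pb^*\cdot\xb^i\bigr).\]
The first summand is non-negative by supply feasibility and $\pb^*\ge\mathbf{0}$. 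The main obstacle is showing the second summand is also non-negative, equivalently that $\pb^*\cdot\xb^i\le\beta^i$ for each $i$ with $\mu^{*i}>0$. I plan to establish this by combining the aggregate bound $\pb^*\cdot\sum_i\xb^i\le\pb^*\cdot\ssb=\sum_i\pb^*\cdot\xb^{*i}$ with the Lagrangian optimality conditions of the feasible outcome $(\pb,\xb^i)$ to force the per-buyer budget inequality at the CE prices.

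For the reverse direction, I argue by contraposition. Suppose $(\pb,\xb^i)$ is feasible but not a CE, so some good $j$ has $p_j>0$ and $\sum_ix^i_j<s_j$. Since a CE $(\pb^*,\xb^{*i})$ exists in our market (via the Arrow--Debreu embedding cited above) and the forward direction gives that the CE has welfare weakly above that of $(\pb,\xb^i)$, it suffices to establish strict inequality. This follows because the CE exhausts supply on every positively-priced good, so it allocates strictly more of $j$ in aggregate than $(\pb,\xb^i)$; by monotonicity of $v^i$, the receiving buyer's marginal value at the CE is $(1+\mu^{*i})p^*_j>0$ when $p^*_j>0$ and non-negative otherwise, yielding strictly greater social welfare at the CE and contradicting the constrained efficiency of $(\pb,\xb^i)$.
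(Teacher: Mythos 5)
Your forward direction is, at its core, the paper's argument (revealed preference of the equilibrium bundle against the alternative allocation at the equilibrium prices, summed over buyers, with market clearing used to sign the resulting price term), rewritten through KKT multipliers. The reformulation is legitimate, but it surfaces a term you never control: $\sum_i\mu^{*i}\bigl(\beta^i-\pb^*\cdot\xb^i\bigr)$. Signing it requires $\pb^*\cdot\xb^i\le\beta^i$ for every buyer whose budget binds at equilibrium, i.e.\ that the alternative allocation is budget-feasible \emph{at the equilibrium prices}. Your proposed route to this cannot work: the aggregate bound $\pb^*\cdot\sum_i\xb^i\le\pb^*\cdot\ssb$ controls only the sum of expenditures, never each buyer's individually, and the Lagrangian conditions of the outcome $(\pb,(\xb^i))$ constrain spending at $\pb$, not at $\pb^*$; nothing you have written rules out $p^*_j>p_j$ on a good that buyer $i$ holds. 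The paper dispenses with the multipliers entirely: it invokes $u^i(\xb^{*i},\pb^*)\ge u^i(\xb^i,\pb^*)$ directly from $\xb^{*i}\in D^i(\pb^*)$ and obtains $\sum_i v^i(\xb^{*i})-\sum_i v^i(\xb^i)\ge\pb^*\cdot\bigl(\sum_i\xb^{*i}-\sum_i\xb^i\bigr)\ge 0$ in one line. (That step presupposes exactly the budget feasibility of $\xb^i$ at $\pb^*$ that you are missing, so you have correctly located the delicate point of the argument --- but you have not closed it, and your stated plan will not close it.)

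The reverse direction also does not go through as written. The paper derives strictness purely from prices: constrained efficiency forces the welfare difference to vanish, hence $\pb^*\cdot\bigl(\sum_i\xb^{*i}-\sum_i\xb^i\bigr)\le 0$ by the same inequality, while failure to clear a positively-priced good at the \emph{equilibrium} prices forces that quantity to be strictly positive --- a contradiction, with no appeal to valuations. You instead try to extract a strict welfare gap from marginal values, and this fails on three counts: the good $j$ you pick has $p_j>0$ at the non-equilibrium prices, which says nothing about $p^*_j$, and if $p^*_j=0$ the equilibrium need not exhaust $j$, so your ``receiving buyer'' may not exist; the model assumes $v^i$ only monotone and concave, not strictly increasing, so allocating strictly more of $j$ need not strictly increase welfare; and the stationarity identity equating the marginal value to $(1+\mu^{*i})p^*_j$ requires differentiability (not assumed) and holds only for goods the buyer consumes in positive quantity. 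Replacing this paragraph with the paper's price-based contradiction is both shorter and correct under the stated hypotheses.
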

\begin{proof}
Fix a competitive equilibrium $(\pb, (\xb^i))$ and let $(\qb, (\yb^i))$ be any \emph{feasible} outcome.
For convenience, we define the aggregately allocated bundles $\xb = \sum_{i \in [m]} \xb^i$ and $\yb = \sum_{i \in [m]}\yb^i$.
As $(\pb, (\xb^i))$ is feasible, the quasi-linearity of utilities tells us that
$v^i(\xb^i) + \beta^i - \pb \cdot \xb^i \geq v^i(\yb^i) + \beta^i - \pb \cdot \yb^i$ for every buyer $i$, so
\begin{equation}
\label{eq:welfare-theorem}
\sum_{i \in [m]} v^i(\xb^i) - \sum_{i \in [m]} v^i(\yb^i) \geq \sum_{i \in [m]} \pb \cdot (\xb^i - \yb^i) = \pb \cdot (\xb - \yb).
\end{equation}
As $(\pb, (\xb^i))$ is a competitive equilibrium, we have $x_j = s_j$ for all goods $j$ with $p_j > 0$. The feasibility of $(\qb, (\yb^i))$ implies $y_j \leq s_j = x_j$ for all $j$ with $p_j > 0$, so $\pb \cdot (\xb - \yb) \geq 0$. As $(\qb, (\yb^i))$ was an arbitrary feasible outcome, \eqref{eq:welfare-theorem} thus implies that $(\pb, (\xb^i))$ maximizes  social welfare among all feasible outcomes.

We now show that a constrained efficient outcome must also be a competitive equilibrium. Suppose that $(\qb, (\yb^i))$ maximizes social welfare among all feasible outcomes, so~$\sum_{i \in [m]} v^i(\yb^i) = \sum_{i \in [m]} v^i(\xb^i)$. By \eqref{eq:welfare-theorem}, we have
\begin{equation}
0  = \sum_{i \in [m]} v^i(\xb^i) - \sum_{i \in [m]} v^i(\yb^i) \geq \pb \cdot (\xb - \yb).
\end{equation}
If $(\qb, (\yb^i))$ is not a competitive equilibrium, we have $y_j \leq x_j$ for all goods $j$ for which $p_j > 0$, with strict inequality $y_j < x_j$ for at least one such good. Thus, $\pb \cdot (\xb - \yb) > 0$, a contradiction. 
\end{proof}

We note that \cref{proposition:welfare} continues to hold if the seller has non-zero costs. Costs for sold supply or, equivalently, utility for retained supply may be relevant in the application of financial asset exchanges. In most cases, leftover assets have value to the seller, or the issuance of financial instruments comes at a cost.

\section{The Revenue-Welfare Coincidence}
\label{section:revenue-welfare-coincidence}

We now assume that each buyer has a linear valuation. In this linear market setting, we establish that competitive equilibrium prices are unique and maximise not just constrained social welfare (\cref{proposition:welfare}) but also revenue. However, while \cref{proposition:welfare} tells us that constrained social welfare is maximised only at these prices, maximum revenue may also be attained at higher prices, at which the seller only sells a subset of her supply.

\begin{theorem}
\label{theorem:coincidence}
In the market with linear valuations, competitive equilibrium prices are unique. Moreover, all competitive equilibrium outcomes maximise revenue.
\end{theorem}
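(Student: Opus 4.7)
My plan is to identify the elementwise-minimum $\pb^*$ of the feasible region and derive both the uniqueness and the revenue-optimality claim through it. By applying \cref{proposition:elementwise-minimal-prices} iteratively (together with closedness of the feasible region), $\pb^*$ is well-defined as the unique elementwise-smallest feasible price vector, and every feasible $\qb$ satisfies $\qb \geq \pb^*$.

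First, I would establish that $\pb^*$ is itself a competitive equilibrium. Suppose, for contradiction, that some good $j$ with $p^*_j > 0$ has excess supply under every feasible allocation at $\pb^*$. Then a sufficiently small decrease of $p^*_j$ ought to preserve feasibility, contradicting the minimality of $\pb^*$. The delicate point is that lowering $p^*_j$ can cause buyers previously indifferent between $j$ and some other good to switch to demanding only $j$, producing a discontinuous upward jump in demand; starting from an allocation at $\pb^*$ that minimises demand on $j$ and perturbing should sidestep this issue.

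Second, I would prove the revenue bound $\sum_i \qb \cdot \yb^i \leq \pb^* \cdot \ssb$ for every feasible outcome $(\qb, (\yb^i))$. At the CE $(\pb^*, (\xb^{*i}))$, market clearing on positive-priced coordinates immediately gives revenue $\pb^* \cdot \ssb$. For the general bound, I would use $\qb \geq \pb^*$ so that each $\yb^i$ is also budget-feasible at $\pb^*$, and apply revealed preference at $\pb^*$. Combined with the linear-value demand structure (buyers concentrate on argmax-ratio goods and spend their full budget whenever the best ratio exceeds one) and the aggregate supply constraint $\sum_i \yb^i \leq \ssb$, these ingredients should yield the inequality. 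This is the main obstacle: neither the budget bound $\sum_i \beta^i$ nor the supply-priced bound $\qb \cdot \ssb$ is individually tight, and buyers who are indifferent at $\pb^*$ (where $\pb^* \cdot \xb^{*i}$ may fall strictly below $\beta^i$) require careful bookkeeping to ensure the aggregate spending at $\qb$ cannot exceed the CE revenue.

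Finally, uniqueness follows from the revenue bound: any CE $\pb$ is feasible so $\pb \geq \pb^*$, and has revenue exactly $\pb \cdot \ssb$ from market clearing on positively priced coordinates. Applying the bound with $\qb = \pb$ yields $\pb \cdot \ssb \leq \pb^* \cdot \ssb$, and combined with $\pb \geq \pb^*$ and $\ssb \geq \vec{0}$ this forces $p_j = p^*_j$ at every coordinate with $s_j > 0$, giving uniqueness of CE prices on all positively-supplied goods. The same computation, applied to any revenue-maximising feasible outcome, yields the second assertion of the theorem.
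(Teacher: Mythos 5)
Your overall architecture---reduce everything to the elementwise-minimal feasible price vector $\pb^*$, show that it clears the market and maximises revenue, and derive uniqueness of equilibrium prices from the revenue comparison---matches the paper's, and your final uniqueness step is essentially the paper's proof of \cref{proposition:equilibrium}. However, the two substantive steps in the middle each have a genuine gap.

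First, the perturbation argument that $\pb^*$ clears the market. You correctly identify the danger: lowering $p^*_j$ forces every buyer with $j \in J^i(\pb^*)$ to spend his \emph{entire} budget on good $j$. But the fix you propose---start from an allocation at $\pb^*$ that minimises the quantity of $j$ allocated---cannot address this, because the post-perturbation demand is determined entirely by the new prices and is independent of whichever allocation you selected beforehand: after the perturbation each such buyer demands exactly $\beta^i/(p^*_j-\varepsilon)$ units of $j$, full stop. What is actually needed is the opposite allocation: take any feasible allocation and shift the spending of every buyer with $j \in J^i(\pb^*)$ onto good $j$ (this stays within each demand set and only decreases the load on other goods); either this clears good $j$, contradicting the hypothesis of universal excess supply, or it terminates with $\sum_{i : j \in J^i(\pb^*)} \beta^i < p^*_j s_j$, and only then is the small price decrease safe. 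Your sketch, as written, does not supply this dichotomy. The paper sidesteps the whole issue by invoking existence of a competitive equilibrium (Arrow--Debreu) and then using revenue maximality of $\pb^*$ to force the equilibrium prices down to $\pb^*$.

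Second, the revenue bound $\sum_i \qb\cdot\yb^i \le \pb^*\cdot\ssb$. Revealed preference at $\pb^*$ is the tool behind the \emph{welfare} comparison in \cref{proposition:welfare}; it compares $\sum_i v^i(\xb^i)-\sum_i v^i(\yb^i)$ to $\pb^*\cdot(\xb-\yb)$ and gives no handle on what buyers \emph{pay} at $\qb$. The paper's actual argument (\cref{proposition:revenue}) is combinatorial rather than revealed-preference based: for feasible $\pb\le\qb$ it constructs a hybrid allocation at $\pb$, assigning buyers whose entire bang-per-buck set lies inside the cheapened goods $S$ their $\pb$-allocation (such buyers necessarily exhaust their budgets since $0\notin S$), and all other buyers their $\qb$-allocation (shown, via the containments $J^i(\qb)\subseteq J^i(\pb)$ and $J^i(\qb)\cap S=\emptyset$, to still be demanded at $\pb$ at unchanged cost), and then verifies supply feasibility good by good. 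You flag this step yourself as the main unresolved obstacle; it is indeed the heart of the theorem, and the ingredients you list (budget bound, supply bound, revealed preference) do not combine to produce it.
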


The linear valuation $v^i(\xb) = \vb^i \cdot \xb$ of each buyer $i$ is expressed by a \emph{valuation vector} $\vb^i$ representing his linear per-unit values $v^i_j \geq 0$ for each good~$j$. The per-unit value of money is $v^j_0 = 1$ for each buyer. Each buyer's utility function is then $u^{i}(\xb, \pb) = \vb^i \cdot \xb + \beta^i - \pb \cdot \xb$, and his \textit{demand correspondence} is $D^i(\pb) = \argmax_{\xb \in \Rn, \xb \cdot \pb \leq \beta^{i}} (\vb^i - \pb) \cdot \xb$. \cref{fig:arctic-bid} illustrates the demand that arises from linear valuations.

We assume that every good is valued positively by at least one buyer (otherwise we simply remove the good from the market). This guarantees that any feasible prices are strictly positive.

\begin{figure}[!tb]
\centering
\begin{subfigure}[t]{0.45\textwidth}
\centering
    \begin{tikzpicture}[scale=1]
    \pgfplotsset{scale=0.7}
    \begin{axis}[
    axis lines=middle,
    axis line style={->},
    x label style={anchor=west},
    y label style={anchor=south},
    xlabel={$p_A$}, ylabel={$p_B$},
    xtick = {5},
    ytick = {3},
    xmin=0, xmax=8,
    ymin=0, ymax=8,
    grid style=dashed,
    clip=false,
    ]
    \draw[margin] (0,0) -- (5, 3);
    \draw[margin] (5,3) -- (5,8);
    \draw[margin] (5,3) -- (8,3);
    \node[bundle] at (2.5,5) {$\left(\frac{\beta}{p_1},0\right)$};
    \node[bundle] at (7,1.5) {$\left (0,\frac{\beta}{p_2} \right)$};
    \node[bundle] at (7,5) {$\left(0,0 \right)$};
    \node[posbid] at (5,3) {};
    \end{axis}
\end{tikzpicture}
\end{subfigure}
\begin{subfigure}[t]{0.45\textwidth}
\centering
\begin{tikzpicture}
    \pgfplotsset{scale=0.7}
    \begin{axis}[
    axis lines=middle,
    axis line style={->},
    x label style={anchor=west},
    y label style={anchor=south},
    xlabel={$p_A$}, ylabel={$p_B$},
    xtick = {3,6},
    ytick = {3,6},
    xmin=0, xmax=8,
    ymin=0, ymax=8,
    grid style=dashed,
    clip=false,
    ]
    \node[posbid, label=left:$\bm{b}^1$] (b1) at (3,6) {};
    \node[posbid, label=below:$\bm{b}^2$] (b2) at (6,3) {};
    \coordinate (b1) at (3,6);
    \coordinate (b2) at (6,3);
    \draw[margin] (0,0) -- (b1);
    \draw[margin] (0,0) -- (b2);
    \draw[margin] (b1) -- (3,8);
    \draw[margin] (b1) -- (8,6);
    \draw[margin] (b2) -- (6,8);
    \draw[margin] (b2) -- (8,3);
    \node[bundle] at (7,7) {$\left(0,0 \right)$};
    \node[bundle] at (4.5,7) {$\left(\frac{\beta^2}{p_1},0\right)$};
    \node[bundle] at (7,4.5) {$\left (0,\frac{\beta^1}{p_2} \right)$};
    \node[bundle] at (4.5,4.5) {$\left (\frac{\beta^2}{p_1},\frac{\beta^1}{p_2} \right)$};
    \node[bundle] at (1.5,7) {$\left(\frac{\beta^1 + \beta^2}{p_1},0\right)$};
    \node[bundle] at (7,1.5) {$\left(0, \frac{\beta^1 + \beta^2}{p_2} \right)$};
\end{axis}
\end{tikzpicture}
\end{subfigure}

\caption{The demand of buyers with quasi-linear utilities, linear valuations and budget constraints, which divides price space into convex regions. In each region, we specify the bundle demanded, which depends on prices $\pb$.
\textbf{Left:} The demand of a single buyer with linear values $\vb = (5,3)$ and budget $\beta$ leads to three regions.
\textbf{Right:} The aggregate demand of two buyers, one with values $\vb^1 = (3,6)$ and budget $\beta^1$, and the other with values $\vb^2 = (6,3)$ and budget $\beta^2$.}
\label{fig:arctic-bid}
\end{figure}

In order to prove \cref{theorem:coincidence}, we first consider the feasible region, which consists of the set of all feasible prices (cf.~\cref{definition:feasible-region}). A key property of the feasible region is that it forms a lower semi-lattice. That is, there exists an elementwise-minimal price vector $\pb^*$ that is dominated by all other feasible prices. This is illustrated in \cref{fig:example-two}.
We develop these geometric insights in \cref{sec:elementwise-minimal-prices}. We then prove, in \cref{sec:elementwise-minimal-revenue}, that revenue is maximised, and the market is cleared, at these prices $\pb^*$.

For these proofs, we also make use of an alternative characterisation of the demand correspondence of a buyer with a linear valuation. Formally, we fix the price of the money at $p_0=1$.
At any given prices $\pb$, let $J^{i}(\pb) \coloneqq \argmax_{j \in [n]_0} \frac{v^{i}_j}{p_j}$ be the set of goods that maximise buyer $i$'s \emph{bang-per-buck}.
%
Note that $v^i_0 = p_0 = 1$ by definition, so $J^{i}(\pb)$ contains the \nullgood{} good $0$ if $\max_{j \in [n]_0}\frac{v^{i}_j}{p_j} = 1$, and we have $J^{i}(\pb) = \{0\}$ if $\max_{j \in [n]}\frac{v^{i}_j}{p_j} < 1$.

\Cref{lemma:argmax-demand} makes the observation that demanded bundles only contain quantities of goods that maximise a buyer's bang-per-buck; in other words, if $\xb$ is a bundle demanded by buyer $i$ at $\pb$, then $x_j > 0$ implies $j \in J^{i}(\pb)$. Moreover, any demanded bundle is the convex combination of the ``extremal'' bundles that arise when the entire budget is spent on a single demanded good in $J^{i}(\pb)$. In particular, $0 \in J^i(\pb)$ means that the buyer can maximise his utility by not spending his entire budget.

The convex hull of $S$, i.e., the set of all convex combinations of elements in $S$, is written as $\conv S$.

\begin{lemma}
\label{lemma:argmax-demand}
\label{lemma:bang-per-buck}
For any buyer $i$ with linear valuation $v^i$ and budget $\beta^i$, we have $D^i(\pb) = \conv \{\frac{\beta^i}{p_j} \eb^j \mid j \in J^i(\pb) \}$.
\end{lemma}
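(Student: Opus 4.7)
The plan is to recast the demand correspondence as a bounded linear programme and identify its set of optimisers directly. Dropping the additive constant $\beta^i$, we can write
\[
D^i(\pb) = \argmax_{\xb \in \Rn_+,\; \pb \cdot \xb \leq \beta^i}\; (\vb^i - \pb) \cdot \xb.
\]
Because every good is valued positively by some buyer, feasible prices are strictly positive, so the feasible polytope $P = \{\xb \in \Rn_+ : \pb \cdot \xb \leq \beta^i\}$ is a bounded simplex-like region whose vertices are exactly $\vec{0}$ together with $\frac{\beta^i}{p_j}\eb^j$ for $j \in [n]$. By standard LP theory, the set of optimisers is the convex hull of the optimal vertices, reducing the task to identifying which vertices are optimal.

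I would then evaluate the linear objective at each vertex. At $\vec{0}$ it is $0$, and at $\frac{\beta^i}{p_j}\eb^j$ it equals $\beta^i\bigl(\tfrac{v^i_j}{p_j} - 1\bigr)$. Because $v^i_0 = p_0 = 1$, ranking these values is equivalent to ranking $\tfrac{v^i_j}{p_j}$ over $j \in [n]_0$, with the vertex $\vec{0}$ identified with the money index $j = 0$ via the convention $\eb^0 = \vec{0}$, so that $\frac{\beta^i}{p_0}\eb^0 = \vec{0}$.

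A short case analysis on $M := \max_{j \in [n]} \tfrac{v^i_j}{p_j}$ then closes the proof. If $M > 1$, the optimal vertices are exactly $\frac{\beta^i}{p_j}\eb^j$ for $j \in J^i(\pb) \subseteq [n]$, and $\vec{0}$ is strictly suboptimal. If $M < 1$, then $J^i(\pb) = \{0\}$ and only $\vec{0}$ is optimal. If $M = 1$, then $0 \in J^i(\pb)$ and the optimal vertices are $\vec{0}$ together with $\frac{\beta^i}{p_j}\eb^j$ for $j \in J^i(\pb) \cap [n]$. In all three cases the convex hull of the optimal vertices coincides with $\conv\{\frac{\beta^i}{p_j}\eb^j : j \in J^i(\pb)\}$, as claimed.

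The only real subtlety is the uniform bookkeeping of the money good: by declaring $\eb^0 = \vec{0}$, the ``save the whole budget'' option and the ``spend everything on good $j$'' options fit into a single formula, letting the three cases of the bang-per-buck comparison collapse into one. Once this convention is in place, the rest is a one-line LP argument; no additional obstacle is expected.
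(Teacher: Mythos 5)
Your proposal is correct and follows essentially the same route as the paper: both identify the budget set as a bounded polytope with vertices $\vec{0}$ and $\frac{\beta^i}{p_j}\eb^j$, invoke the fact that the optimal face of a linear programme is the convex hull of its optimal vertices, and compare the vertex values $\beta^i\bigl(\tfrac{v^i_j}{p_j}-1\bigr)$ to conclude that the optimal vertices are exactly those indexed by $J^i(\pb)$. Your explicit three-case analysis on $M$ and the bookkeeping convention $\eb^0=\vec{0}$ merely spell out what the paper compresses into a single line, so there is nothing substantive to add.
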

\begin{proof}
Recall that the buyer's demand is $D^i(\pb) = \argmax_{\xb \in \R^n_{+}, \xb \cdot \pb \leq \beta^i} (\vb^i-\pb) \cdot \xb$. The space $\{\xb \in \R^n_{+} \mid \xb \cdot \pb \leq \beta^i\}$ of all bundles that the buyer can afford at prices $\pb$ is a closed polyhedron spanned by vertices $\yb^0 \coloneqq \bm{0}$ and $\yb^j \coloneqq \frac{\beta^i}{p_j} \eb^j$ for each good $j \in [n]$. The fundamental theorem of linear algebra tells us that any bundle $\xb \in D^i(\pb)$ demanded at $\pb$ is the convex combination of the vertices $\yb^j$ that maximise $f(\xb) \coloneqq (\vb^i-\pb) \cdot \xb$. As $f(\yb^j) = (\vb^i-\pb) \cdot \frac{\beta^i}{p_j} \eb^j = \beta^i(\frac{v^i_j}{p_j} - 1)$ for each $j \in [n]_0$, we see that $\yb^j$ maximises $f(\yb)$ iff $j \in J^i(\pb)$.
\end{proof}

In \cref{sec:arctic}, we will also see that the Arctic Product-Mix Auction introduces a bidding language which starts from this definition of demand to characterise a more general class of preferences.

\subsection{Elementwise-Minimal Feasible Prices}
\label{sec:elementwise-minimal-prices}
Recall from \cref{definition:feasible-region} that prices $\pb$ are feasible if they can be extended with an allocation $(\xb^i)$ to a feasible outcome $(\pb, (\xb^i))$. We now show that the set of feasible prices form a lower semi-lattice. In particular, there exists a price vector $\pb^*$ that is elementwise smaller than all other feasible prices (so that $\pb^* \leq \pb$ for all feasible $\pb$).

\begin{proposition}
\label{proposition:elementwise-minimal-prices}
    The feasible region has a unique elementwise-minimal price vector $\pb^*$.
\end{proposition}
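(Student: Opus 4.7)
The plan is to decompose the proposition into two structural properties of the feasible region $F$, and then construct the minimum coordinate-by-coordinate. The first is topological: $F$ is contained in the strictly positive orthant and is closed there. The second, which is the heart of the argument, is order-theoretic: $\pb, \qb \in F$ implies $\min(\pb,\qb) \in F$. From these two facts, $\pb^*$ arises as follows. For each good $j$, set $\bar{p}_j := \inf\{q_j : \qb \in F\}$; this infimum is attained by some $\pb^{(j)} \in F$ by first truncating a minimizing sequence via $\min(\cdot, \pb^{(0)})$ for a fixed $\pb^{(0)} \in F$ to stay bounded, and then extracting a convergent subsequence using closedness. Then $\pb^* := \min(\pb^{(1)},\ldots,\pb^{(n)})$ lies in $F$ by iterated application of the lattice property and satisfies $\pb^*_j = \bar{p}_j$ for every $j$, so $\pb^* \leq \pb$ for all $\pb \in F$ and uniqueness follows.

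The main obstacle is the lattice property. Given feasible outcomes $(\pb,(\xb^i))$ and $(\qb,(\yb^i))$, set $\rb := \min(\pb,\qb)$ and partition $[n]$ into $A := \{j : p_j \leq q_j\}$ and $B := [n]\setminus A$, so $r_j = p_j$ on $A$ and $r_j = q_j$ on $B$. The crucial identity is the coordinatewise formula $v^i_j/r_j = \max(v^i_j/p_j,\ v^i_j/q_j)$, which yields $\max_j v^i_j/r_j = \max\bigl(\max_j v^i_j/p_j,\ \max_j v^i_j/q_j\bigr)$. I would split buyers into three types by which side attains this maximum: type~P if $\max_j v^i_j/p_j > \max_j v^i_j/q_j$, type~Q if the reverse, and type~T in the case of equality. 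The key sub-claim is that $J^i(\pb)\cap [n] \subseteq A$ for types P and T, since any $j \in B \cap J^i(\pb)$ would give $v^i_j/q_j > v^i_j/p_j = \max_k v^i_k/p_k$, contradicting $\max_k v^i_k/q_k \leq \max_k v^i_k/p_k$; a symmetric argument (using that strict ties $p_j=q_j$ lie in $A$) shows $J^i(\qb)\cap [n] \subseteq B$ for type Q.

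With this in place, define $\zb^i := \xb^i$ for types P and T, and $\zb^i := \yb^i$ for type Q. Since $r_j = p_j$ on the support of $\xb^i$ (contained in $A$ by the sub-claim) and a short check gives $J^i(\pb) \subseteq J^i(\rb)$ for these types, \cref{lemma:argmax-demand} implies $\xb^i \in D^i(\rb)$; symmetrically $\yb^i \in D^i(\rb)$ for type Q. Supply is preserved because on each $j \in A$ only types P and T contribute (type Q has $y^i_j = 0$ on $A$), so $\sum_i z^i_j \leq \sum_i x^i_j \leq s_j$, and the argument on $B$ is symmetric. Finally, positivity of $F$ follows because $p_j \to 0$ makes the demand for $j$ diverge for any buyer valuing $j$ positively, and closedness of $F$ in the positive orthant follows from upper hemi-continuity of each $D^i$ combined with the supply bound on allocations (pass to a convergent subsequence of supporting allocations).
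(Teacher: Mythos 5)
Your proof is correct and follows essentially the same route as the paper: the heart of both arguments is the lattice property that $\pb \wedge \qb$ is feasible, established by partitioning the goods into $A$ and $B$ and assigning each buyer either their $\pb$-allocation or their $\qb$-allocation (your classification of buyers by which of the two maximal bang-per-buck values is larger is equivalent to the paper's classification by whether the buyer demands a $B$-good at $\rb \coloneqq \pb\wedge\qb$). Your extraction of the minimal point via coordinatewise infima, truncation, and iterated meets is somewhat more explicit than the paper's appeal to a feasible point that ``cannot be reduced further,'' but it rests on the same two ingredients (closedness of the feasible region and the semi-lattice structure).
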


In order to develop the proof of \cref{proposition:elementwise-minimal-prices}, we first define the \emph{elementwise minimum} $\pb \wedge \qb$ of two prices $\pb$, $\qb$ by $(\pb \wedge \qb)_j = \min\{p_j, q_j\}$ for all goods $j \in [n]$.
The following lemma is central to our proof of \cref{proposition:elementwise-minimal-prices}.

\begin{lemma}
\label{lemma:elementwise-feasibility}
If $\pb$ and $\qb$ are feasible, then so is their elementwise minimum $\pb \wedge \qb$.
\end{lemma}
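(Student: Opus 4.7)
The plan is to construct a feasible outcome $(\rb, (\zb^i))$ at $\rb := \pb \wedge \qb$ from the given feasible outcomes $(\pb, (\xb^i))$ and $(\qb, (\yb^i))$. I would first partition the goods as $P := \{j \in [n] : p_j \leq q_j\}$ and $Q := [n] \setminus P$, so that $r_j = p_j$ on $P$ and $r_j = q_j$ on $Q$. Using \cref{lemma:bang-per-buck}, an immediate computation gives $\lambda^i(\rb) := \max_{j \in [n]_0} v^i_j/r_j = \max\{\lambda^i(\pb), \lambda^i(\qb)\}$, and the demanded set at $\rb$ can be written in terms of the two inputs: $J^i(\rb) = J^i(\pb) \cap P$ when $\lambda^i(\pb) > \lambda^i(\qb)$, the symmetric expression when $\lambda^i(\qb) > \lambda^i(\pb)$, and $(J^i(\pb) \cap P) \cup (J^i(\qb) \cap Q)$ in the tie case.

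Next, I would reformulate the existence of an $(\zb^i)$-allocation as a transportation/flow feasibility problem. Treat each buyer $i$ as a source of budget $\beta^i$, each good $j$ as a sink with value-capacity $r_j s_j$, money as an auxiliary sink reachable only when $0 \in J^i(\rb)$ (equivalently, $\lambda^i(\rb) = 1$), and place an edge from $i$ to $j$ exactly when $j \in J^i(\rb)$. Feasibility of $\rb$ is then equivalent to the existence of a flow saturating every buyer's supply. The attempted construction is to paste the flows witnessing feasibility at $\pb$ and $\qb$: for each buyer $i$ with $\lambda^i(\pb) \geq \lambda^i(\qb)$, carry over the $\pb$-spending restricted to $P$-goods; for $\lambda^i(\qb) \geq \lambda^i(\pb)$, carry over the $\qb$-spending restricted to $Q$-goods; and in the tie case, combine the two pieces and scale down by a common factor to fit into $\beta^i$. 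The supply constraints $\sum_i z^i_j \leq s_j$ follow automatically because each $P$-good inherits only $\pb$-contributions (already respecting $\sum_i x^i_j \leq s_j$) and each $Q$-good only $\qb$-contributions.

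The hard part will be handling buyers $i$ for whom $\lambda^i(\rb) > 1$, so that the budget must be exhausted exactly, but for whom a fraction of $\beta^i$ was spent at $\pb$ on goods in $J^i(\pb) \cap Q$ (no longer in $J^i(\rb)$) without being made up by the $\qb$-contributions on $Q$. The residual spending must then be redirected onto $J^i(\rb)$ without violating any good's value-capacity $r_j s_j$. I expect this redirection step to be the technical heart of the proof, and I would complete it by a Hall/max-flow min-cut argument: any obstruction to the redirection at $\rb$ would, by construction of the network from the $\pb$- and $\qb$-networks, translate into a violation of the Hall condition in one of the latter two, contradicting the assumed feasibility of $\pb$ or $\qb$. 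A symmetric construction yields the other half of the argument, completing the proof that $\rb$ is feasible.
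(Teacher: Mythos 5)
Your overall strategy---pasting together, buyer by buyer, pieces of the allocations witnessing the feasibility of $\pb$ and of $\qb$, with the choice governed by whether $\pb$ or $\qb$ wins the bang-per-buck comparison $\lambda^i(\pb)$ vs.\ $\lambda^i(\qb)$---is essentially the paper's construction, and your identity $J^i(\rb)=J^i(\pb)\cap P$ (when $\lambda^i(\pb)>\lambda^i(\qb)$) is the content of the paper's \cref{lemma:A-set-demand}. However, as written the proof has a genuine gap: the step you yourself flag as ``the technical heart''---redirecting the residual spending of budget-exhausting buyers via a Hall/max-flow--min-cut argument---is only asserted, not carried out. The claim that ``any obstruction \dots would translate into a violation of the Hall condition in one of the latter two'' networks is exactly what would need a proof, and nothing in the proposal supplies it.

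The good news is that this hard step is vacuous, and noticing why lets you delete the entire flow machinery. If $\lambda^i(\pb)\geq\lambda^i(\qb)$, then $J^i(\pb)\cap Q=\emptyset$: any $j\in J^i(\pb)$ with $q_j<p_j$ would give $\lambda^i(\qb)\geq v^i_j/q_j>v^i_j/p_j=\lambda^i(\pb)$, a contradiction. So for such a buyer the bundle $\xb^i$ is supported entirely on $P\cup\{0\}$, where $\rb$ agrees with $\pb$; ``restricting to $P$'' discards nothing, no spending ever lands on goods outside $J^i(\rb)$, and by \cref{lemma:bang-per-buck} the unmodified bundle $\xb^i$ is still demanded at $\rb$ (its extremal generators $\frac{\beta^i}{p_j}\eb^j=\frac{\beta^i}{r_j}\eb^j$, $j\in J^i(\pb)\subseteq J^i(\rb)$, are among those of $D^i(\rb)$). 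The symmetric statement holds for $\yb^i$ when $\lambda^i(\qb)>\lambda^i(\pb)$, and in the tie case $\xb^i$ alone already works, so no combining or rescaling is needed either. With that observation your supply argument (each $P$-good inherits only $\pb$-contributions, each $Q$-good only $\qb$-contributions) goes through verbatim, and the proof is complete---and coincides with the paper's, which assigns each buyer wholesale either $\xb^i$ or $\yb^i$ according to whether they demand a good on the $\qb$-side at $\rb$.
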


Fix feasible prices $\pb$ and $\qb$ with elementwise minimum $\rb \coloneqq \pb \wedge \qb$, and let $(\xb^i)$ and $(\yb^i)$ respectively denote allocations that extend $\pb$ and $\qb$ to feasible outcomes $(\pb, (\xb^i))$ and $(\qb, (\yb^i))$. In order to prove \cref{lemma:elementwise-feasibility}, we construct a third allocation $(\zb^i)$ and show that $(\rb, (\zb^i))$ is a feasible outcome. We first define the set of goods $A$ in which $\pb$ is strictly dominated by $\qb$, and its complement $B$, so $A = \{j \in [n] \mid p_j < q_j \}$ and $B = \{j \in [n] \mid p_j \geq q_j \}$. Then our allocation $(\zb^i)$ is given by
\begin{equation}
\label{eq:z-allocation}
    \zb^i =
    \begin{cases}
        \yb^i & \text{ if buyer $i$ demands some good $j \in B$ at $\rb$,} \\
        \xb^i & \text{ otherwise.}
    \end{cases}
\end{equation}

In order to prove that $(\rb, (\zb^i))$ is feasible, we first state a technical lemma that establishes the connection between a buyer's demand at $\pb$, $\qb$, and $\rb$.

\begin{lemma}
\label{lemma:A-set-demand}
\label{lemma:B-set-demand}
Suppose buyer $i$ demands good $j \in A$ at $\rb$. Then she also demands $j$ at $\pb$ and $J^{i}(\pb) \subseteq J^{i}(\rb)$. Similarly, suppose buyer $i$ demands $j \in B$ at $\rb$. Then she also demands $j$ at $\qb$, and $J^{i}(\qb) \subseteq J^{i}(\rb)$. Moreover, we have $J^{i}(\qb) \subseteq B$.
\end{lemma}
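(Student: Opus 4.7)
The plan is to exploit two elementary observations about bang-per-buck ratios at $\rb$. First, because $r_k = \min\{p_k, q_k\}$, we have $\frac{v^i_k}{r_k} \geq \max\!\left\{\frac{v^i_k}{p_k}, \frac{v^i_k}{q_k}\right\}$ for every good $k \in [n]_0$ (with equality when $k = 0$, since $p_0 = q_0 = 1$). Second, these inequalities are tight on the relevant side of the partition: $r_k = p_k$ for $k \in A$, and $r_k = q_k$ for $k \in B$. Together they allow one to transport a maximiser $j \in J^i(\rb)$ back to either $J^i(\pb)$ or $J^i(\qb)$, depending on whether $j$ sits in $A$ or $B$. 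Throughout I will implicitly use \cref{lemma:bang-per-buck} to translate between the statement ``buyer $i$ demands good $j$ at price $\pb$'' and the inclusion $j \in J^i(\pb)$.

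For the $A$-case, suppose $j \in A \cap J^i(\rb)$. Chaining $\frac{v^i_k}{p_k} \leq \frac{v^i_k}{r_k} \leq \frac{v^i_j}{r_j} = \frac{v^i_j}{p_j}$ for every $k$ shows $j \in J^i(\pb)$, so buyer $i$ demands $j$ at $\pb$. For the inclusion $J^i(\pb) \subseteq J^i(\rb)$, pick any $k \in J^i(\pb)$; then $\frac{v^i_k}{r_k} \geq \frac{v^i_k}{p_k} \geq \frac{v^i_j}{p_j} = \frac{v^i_j}{r_j}$, and maximality of $j$ at $\rb$ forces equality, giving $k \in J^i(\rb)$.

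The $B$-case is the direct analogue obtained by swapping the pair $(\pb, A)$ with $(\qb, B)$. The additional clause $J^i(\qb) \subseteq B$ is \emph{not} symmetric with the $A$-case, so it needs a separate, brief contradiction: suppose some $k \in A$ lies in $J^i(\qb)$. Since $j, k \in J^i(\qb)$ we have $\frac{v^i_k}{q_k} = \frac{v^i_j}{q_j}$; combining $r_k = p_k < q_k$ with $r_j = q_j$ (as $j \in B$) yields $\frac{v^i_k}{r_k} > \frac{v^i_k}{q_k} = \frac{v^i_j}{q_j} = \frac{v^i_j}{r_j}$, contradicting $j \in J^i(\rb)$.

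The main obstacle is conceptual rather than technical: the whole lemma is a careful piece of bookkeeping connecting three price vectors and the $A/B$ partition. All the work happens at the level of bang-per-buck comparisons, and once the two observations above are in hand, each of the three conclusions follows from a short chain of inequalities --- with the mild subtlety that the containment $J^i(\qb) \subseteq B$ must be argued by contradiction rather than by direct substitution, because good $j$ only controls the comparison on its own side of the partition.
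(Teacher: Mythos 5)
Your proof is correct and follows essentially the same route as the paper's: the same bang-per-buck inequality chains for the $A$-case and its $B$-analogue, and the same contradiction argument for $J^i(\qb)\subseteq B$ (you merely locate the contradiction at $j\in J^i(\rb)$ where the paper locates it at $k\in J^i(\qb)$; the underlying chain of comparisons is identical). Both versions share the same harmless implicit step that $v^i_k>0$ for the strict inequality, which is guaranteed because membership in $J^i(\qb)$ forces a bang-per-buck of at least $1$.
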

\begin{proof}
Fix a buyer $i$ who demands good $j \in A$ at $\rb$. As $p_j = r_j$, this implies $\frac{v^i_j}{p_j} = \frac{v^i_j}{r_j} \geq \frac{v^i_k}{r_k} \geq \frac{v^i_k}{p_k}$ for all goods $k \in [n]_0$. The first inequality holds due to \cref{lemma:bang-per-buck}, and the second inequality follows from the fact that $r_k \leq p_k$ for all $k \in [n]_0$. Hence, the buyer demands good $j$ at $\pb$.
For the second claim that $J^{i}(\pb) \subseteq J^{i}(\rb)$, fix a good $k \in J^{i}(\pb)$. Then we have $\frac{v^{i}_k}{r_k} \geq \frac{v^{i}_k}{p_k} \geq \frac{v^{i}_j}{p_j} = \frac{v^{i}_j}{r_j} \geq \frac{v^{i}_l}{r_l}$ for all goods $l \in [n]_0$. The first inequality holds due to $r_k \leq p_k$, and the second and third inequalities follow from the fact that the buyer $i$ demands $k$ at~$\pb$ and~$j$ at~$\rb$. Hence, if the buyer demands good $k$ at $\pb$, then they demand $k$ at $\rb$.
    
Now suppose that the buyer demands $j \in B$ at $\rb$. The proof of the first claim is identical to the case $j \in A$. We prove the last claim that $J^{i}(\qb) \subseteq B$. Suppose, for contradiction, that buyer $i$ demands a good $k \in A$ at $\qb$, and good $j \in B$ at $\rb$. This implies $\frac{v^{i}_k}{q_k} < \frac{v^{i}_k}{p_k} = \frac{v^{i}_k}{r_k} \leq \frac{v^{i}_j}{r_j} = \frac{v^{i}_j}{q_j}$, in contradiction to \cref{lemma:bang-per-buck} and the fact that $k$ is demanded at~$\qb$. 
\end{proof}

We can now prove \cref{lemma:elementwise-feasibility}.
\begin{proof}[Proof of \cref{lemma:elementwise-feasibility}]
Let $\pb$ and $\qb$ be two feasible prices and $\rb \coloneqq \pb \wedge \qb$ denote their elementwise minimum. As above, $(\xb^i)$ and $(\yb^i)$ are allocations that extend $\pb$ and $\qb$ to feasible outcomes, and $(\zb^i)$ is defined as in \eqref{eq:z-allocation}. We now prove that $(\rb, (\zb^i))$ is a feasible outcome, and so that it satisfies the two criteria in \cref{definition:feasible-outcome}. We can partition buyers into two sets: the set $\mathcal{B} \subseteq [m]$ of buyers that demand a good in $B$ at $\rb$, and the set $\mathcal{A} \coloneqq [m] \setminus \mathcal{B}$ of buyers that do not. Note that, by \cref{lemma:B-set-demand}, the buyers in $\mathcal{B}$ demand only goods in $B$ at $\qb$ and, by definition, the buyers in $\mathcal{A}$ demand only goods in $A$ at~$\pb$. We thus observe: in outcome $(\pb, (\xb^i))$, each buyer $i \in \mathcal{A}$ is only allocated quantities of goods in $A$ (so $\sum_{i \in \mathcal{A}} x^i_j > 0$ only if $j \in A$); similarly, in outcome $(\qb, (\yb^i))$, every buyer $i \in \mathcal{B}$ only receives quantities of goods in $B$ (so $\sum_{i \in \mathcal{B}} y^i_j > 0$ only if $j \in B$).

First we show that $(\zb^i)$ satisfies condition (i) of feasibility in \cref{definition:feasible-outcome}, i.e., that $\sum_{i \in [m]} z^i_j \leq s_j$ for all goods ${j \in [n]}$. Fix some good $j \in A$. Then, by definition of $(\zb^i)$, we have $\sum_{i \in [m]} z^i_j = \sum_{i \in \mathcal{A}} x^i_j + \sum_{i \in \mathcal{B}} y^i_j$. Recalling that $\sum_{i \in \mathcal{B}} y^i_j = 0$ (as $j \in A$) and that $(\pb, (\xb^i))$ is feasible, we get $ \sum_{i \in [m]} z^i_j \leq \sum_{i \in [m]} x^i_j \leq s_j$. This implies that~$(\rb, (\zb^i))$ does not over-allocate any goods $j \in A$. Analogously, we can show that $(\zb^i)$ does not over-allocate any goods $j \in B$ by recalling that $\sum_{i \in \mathcal{A}} x^i_j = 0$ for any good $j \in B$. As $A \cup B = [n]$, we have shown that $(\rb,\zb^i)$ satisfies condition (i) in \cref{definition:feasible-outcome}.

Next we argue that $(\rb, (\zb^i))$ satisfies condition (ii) of feasibility in \cref{definition:feasible-outcome}. Consider first a buyer $i \in \mathcal{A}$. By definition of $\mathcal{A}$, this buyer demands only goods in $A$ at $\rb$, and so by \cref{lemma:A-set-demand} we have $J^i(\pb) \subseteq J^i(\rb) \subseteq A$. Moreover, by \eqref{eq:z-allocation}, each buyer $i \in \mathcal{A}$ is allocated bundle $\zb^i = \xb^i$, and $x^i_j > 0$ implies $j \in A$. The prices of goods in $A$ are the same at $\pb$ and~$\rb$, by construction of $\rb$, so each buyer $i \in \mathcal{A}$ spends the same in outcome $(\rb, (\zb^i))$ as they do in outcome $(\pb, (\xb^i))$. As the latter outcome is feasible, we have $\zb^i \in D^i(\rb)$.
Similarly, as the buyers in $\mathcal{B}$ only demand goods in $B$, we apply the same argument to see that $\zb^i = \yb^i \in D^i(\rb)$ for every $i \in \mathcal{B}$.
\end{proof}

\begin{proof}[Proof of \cref{proposition:elementwise-minimal-prices}]
Suppose there exists no elementwise-minimal price vector. This means that, for all feasible $\pb$, there exists some feasible $\qb$ with $q_j < p_j$ for at least one good $j \in [n]$. Fix some feasible prices $\pb$ with the property that $\pb$ cannot be reduced any further in any direction without breaking feasibility. Such a point must exist, as the feasible region is closed and restricted to $\Rn_+$. By assumption, there exists a feasible price vector $\qb$ with $q_j < p_j$ for some $j \in [n]$.  Now consider $\rb = \pb \wedge \qb$. By \cref{lemma:elementwise-feasibility}, $\rb$ is feasible. But as $\rb \leq \pb$ with $r_j < p_j$, this contradicts our assumption that $\pb$ cannot be reduced~further.
\end{proof}

\subsection{Maximising Revenue and Welfare}
\label{sec:elementwise-minimal-revenue}
In \cref{sec:elementwise-minimal-prices}, we established that the set of feasible prices contains a unique elementwise-minimal price vector~$\pb^*$. We now show that revenue is maximised at these prices, and that the market is cleared at, and only at, $\pb^*$. Note that we do not assume $\pb^*$ to be the {only} prices at which revenue is maximised; indeed, there can be many revenue-maximising prices. However, a revenue-maximising outcome at $\pb^*$ clears the market, and is thus optimal for buyers among all revenue-maximising outcomes.

\begin{proposition}
\label{proposition:revenue}
The elementwise-minimal feasible prices $\pb^*$ maximise revenue.
\end{proposition}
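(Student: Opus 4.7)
My plan is, for any feasible outcome $(\qb, \yb)$, to construct a feasible outcome $(\pb^*, \zb)$ at $\pb^*$ satisfying $\sum_i \pb^* \cdot \zb^i \geq \sum_i \qb \cdot \yb^i$, from which the proposition follows. Since $\pb^* \leq \qb$ by \cref{proposition:elementwise-minimal-prices}, we have $\rb \coloneqq \pb^* \wedge \qb = \pb^*$. The construction mirrors that in the proof of \cref{lemma:elementwise-feasibility}: partition the goods into $A = \{j : p^*_j < q_j\}$ and $B = \{j : p^*_j = q_j\}$, and the buyers into $\mathcal{B} = \{i : J^i(\pb^*) \cap B \neq \emptyset\}$ and $\mathcal{A} = [m] \setminus \mathcal{B}$. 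Taking any feasible allocation $\xb$ at $\pb^*$, I set $\zb^i \coloneqq \yb^i$ for $i \in \mathcal{B}$ and $\zb^i \coloneqq \xb^i$ for $i \in \mathcal{A}$; \cref{lemma:elementwise-feasibility} then guarantees $(\pb^*, \zb)$ is feasible.

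Next I compute the revenue difference. By \cref{lemma:B-set-demand}, the $\mathcal{B}$-buyers demand only goods in $B$ at $\qb$, so $\yb^i$ is supported on $B$ for each $i \in \mathcal{B}$; combined with $p^*_j = q_j$ on $B$, this yields $\pb^* \cdot \yb^i = \qb \cdot \yb^i$, and the $\mathcal{B}$-contributions cancel:
\[
\sum_i \pb^* \cdot \zb^i - \sum_i \qb \cdot \yb^i \;=\; \sum_{i \in \mathcal{A}} \bigl(\pb^* \cdot \xb^i - \qb \cdot \yb^i\bigr).
\]
I will show each summand is non-negative via a case analysis on $J^i(\pb^*)$ for $i \in \mathcal{A}$. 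If $0 \notin J^i(\pb^*)$, buyer $i$ is forced to spend his full budget at $\pb^*$, so $\pb^* \cdot \xb^i = \beta^i \geq \qb \cdot \yb^i$. If $J^i(\pb^*) = \{0\}$, then $\qb \geq \pb^*$ forces $J^i(\qb) = \{0\}$ as well, so $\xb^i = \yb^i = \bm{0}$ and both sides vanish.

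The main step I expect to need care with is the remaining \emph{indifferent} case, where $J^i(\pb^*)$ contains both the money good $0$ and at least one other good---which must lie in $A$ since $i \in \mathcal{A}$. The key observation is that $J^i(\qb) = \{0\}$ here too, giving $\qb \cdot \yb^i = 0 \leq \pb^* \cdot \xb^i$. I plan to establish this via a bang-per-buck argument: buyer $i$'s $A$-goods in $J^i(\pb^*)$ attain bang-per-buck $1$ at $\pb^*$ but strictly less than $1$ at $\qb$ since $q_j > p^*_j$ for $j \in A$; every other non-money good has bang-per-buck strictly below $1$ at $\pb^*$ already and weakly less at $\qb$ (equal on $B$, smaller on $A$). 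Hence every non-money good has bang-per-buck strictly less than $1$ at $\qb$, so $J^i(\qb) = \{0\}$ and the required inequality holds.
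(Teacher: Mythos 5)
Your proof is correct and follows essentially the same route as the paper's: construct a hybrid allocation at the lower prices by swapping in $\yb^i$ for buyers who demand goods whose price is unchanged, then compare revenue buyer-by-buyer (buyers confined to strictly cheaper goods spend their full budget, the others spend identically at both price vectors). The only deviations are organisational: you place the indifferent buyers with $0 \in J^i(\pb^*)$ and no $B$-demand on the $\xb$-side, which forces your extra sub-case showing $\yb^i = \bm{0}$ (the paper's Case~2 absorbs them with the same conclusion), and you delegate feasibility to \cref{lemma:elementwise-feasibility} rather than re-deriving it.
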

\begin{proof}
We show that, for any feasible $\pb \leq \qb$, the maximum revenue obtainable at~$\pb$ is weakly greater than the revenue obtainable at $\qb$. It then follows immediately that revenue is maximised at $\pb^*$.

Let $(\xb^i)$ and $(\yb^i)$ be allocations that revenue-maximally extend $\pb$ and $\qb$ to feasible outcomes $(\pb, (\xb^i))$ and $(\qb, (\yb^i))$. Our goal is to determine an allocation $(\zb^i)$ so that $(\pb, (\zb^i))$ is a feasible outcome with a revenue that is weakly greater than the revenue of $(\qb, (\yb^i))$. As the revenue of $(\pb, (\xb^i))$ is weakly greater than the revenue of $(\pb, (\zb^i))$, the result then follows by transitivity.

If $\pb = \qb$, there is nothing to prove. Hence we assume that $S := \{j \in [n] \mid p_j < q_j \}$, the set of goods which are priced strictly lower at $\pb$ than at $\qb$, is non-empty. Fix a buyer $i \in [m]$. In order to define the new allocation $\zb^i$ to buyer $i$ at $\pb$, we distinguish between the two cases that $J^i (\pb)$ is, and is not, a subset of $S$.

\begin{description}
\item[Case 1:] Suppose buyer $i$ demands a subset of $S$ at $\pb$, so $J^i(\pb) \subseteq S$. In this case, we set $\zb^i \coloneqq \xb^i$. As $(\pb, (\xb^i))$ is a feasible outcome, we see that $\zb^i \in D^i(\pb)$. Moreover, as $0 \notin S$ by definition of $S$, the buyer spends his entire budget on $\xb^i$ at prices $\pb$.

\item[Case 2:] Suppose $J^{i}(\pb) \not \subseteq S$. We note that $J^{i}(\qb) \cap S = \emptyset$ and buyer $i$ demands all goods in $J^{i}(\qb)$ at $\pb$, so $J^{i}(\qb) \subseteq J^{i}(\pb)$. In this case, we set $\zb^i \coloneqq \yb^i$. As the buyer is only allocated goods not in $S$, and $p_j = q_j$ for all goods $j \in [n]_0 \setminus S$, it follows that the buyer spends the same at $\pb$ and $\qb$, and so $\zb^i \in D^i(\pb)$.
\end{description}
Note that, in both cases, the buyer is only allocated goods that they demand. To summarise, we define $(\zb^i)$ as
\[
    \zb^i \coloneqq  
    \begin{cases}
        \xb^i & \text{ if $J^i(\pb) \subseteq S$,} \\
        \yb^i & \text{ otherwise.}
    \end{cases}
\]

We now prove that $(\pb, (\zb^i))$ is a feasible outcome. We have already argued above that condition (ii) of \cref{definition:feasible-outcome} is satisfied. It remains to show that aggregate demand does not exceed supply $s_j$ for any goods $j \in [n]_0$. Note that the outcome $(\pb, (\zb^i))$ allocates a positive quantity of a good $j \in S$ to a buyer if and only if the buyer satisfies Case 1 above. Indeed, in this case we set $z^i_j = x^i_j$. Hence, for any $j \in S$, we have $\sum_{i \in [m]} z^i_j \leq \sum_{i \in [m]} x^i_j \leq s_j$ as $(\pb, (\xb^i))$ is feasible. Similarly, for any $j \not \in S$ the buyer will satisfy Case 2, and we get $\sum_{i \in [m]} z^i_j \leq \sum_{i \in [m]} y^i_j \leq s_j$ as $(\qb, (\yb^i))$ is feasible.

Finally, we see that $(\pb, (\zb^i))$ achieves weakly greater revenue than $(\qb, (\yb^i))$. Each buyer satisfying Case 1 spends his entire budget and thus contributes a weakly greater amount to overall revenue in outcome $(\pb, (\zb^i))$ than in outcome $(\qb, (\yb^i))$. A buyer satisfying Case 2 contributes the same amount in both outcomes.
\end{proof}

\begin{proposition}
\label{proposition:equilibrium}
The elementwise-minimal feasible prices $\pb^*$ uniquely clear the market.
\end{proposition}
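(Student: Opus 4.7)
The plan is to squeeze market-clearing prices down to $\pb^*$ by combining three ingredients already in hand: the existence of a competitive equilibrium (our market is an Arrow-Debreu exchange economy), \cref{proposition:elementwise-minimal-prices}, and \cref{proposition:revenue}. Concretely, I would fix any competitive equilibrium $(\qb, (\yb^i))$, so that $\pb^* \leq \qb$ by elementwise minimality, and note that its revenue equals $\qb \cdot \ssb$ since the market clears at $\qb$ and feasible prices are strictly positive.

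I would then run the chain $R(\pb^*) \geq R(\qb) = \qb \cdot \ssb \geq \pb^* \cdot \ssb \geq R(\pb^*)$, where $R(\cdot)$ denotes the maximum revenue over feasible allocations at the given prices: the first inequality is \cref{proposition:revenue}, the second follows from $\qb \geq \pb^*$ and $\ssb \geq \bm{0}$, and the last holds since any feasible aggregate allocation is bounded above by $\ssb$. Equality throughout forces $R(\pb^*) = \pb^* \cdot \ssb$, witnessed by a feasible allocation $(\xb^{*,i})$ satisfying $\pb^* \cdot \bigl(\ssb - \sum_i \xb^{*,i}\bigr) = 0$. Since every good is valued positively by some buyer, the paper has already noted that feasible prices are strictly positive; combined with $\sum_i \xb^{*,i} \leq \ssb$, this forces $\sum_i \xb^{*,i} = \ssb$, so $\pb^*$ clears the market.

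For uniqueness, I would take any feasible market-clearing prices $\qb'$: then $R(\qb') = \qb' \cdot \ssb$, and \cref{proposition:revenue} combined with $\pb^* \leq \qb'$ gives $\pb^* \cdot \ssb \geq \qb' \cdot \ssb$. Together with $\qb' - \pb^* \geq \bm{0}$ and $\ssb > \bm{0}$ (all goods are in positive supply), this forces $\qb' = \pb^*$.

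The main obstacle I anticipate is conceptual rather than technical. A direct perturbation argument that lowers $p^*_j$ to absorb excess supply of good $j$ is tempting, but fragile: the bang-per-buck sets $J^i(\pb)$ can jump discontinuously under infinitesimal price changes, and coupling between coordinates may destroy feasibility before the market clears. Channelling the argument through \cref{proposition:revenue} and the existence of an Arrow-Debreu equilibrium avoids these pitfalls entirely, reducing the whole statement to two rounds of the same inequality chain.
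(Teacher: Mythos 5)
Your proposal is correct and follows essentially the same route as the paper: both invoke the existence of an Arrow--Debreu competitive equilibrium $(\qb,(\yb^i))$, the bound $\pb^*\leq\qb$ from \cref{proposition:elementwise-minimal-prices}, and \cref{proposition:revenue} to compare $\qb\cdot\ssb$ with $\pb^*\cdot\ssb$; the paper merely packages the chain as a strict-inequality contradiction under $\qb\gneq\pb^*$, whereas you force equality throughout and read off clearing via complementary slackness (and both arguments share the same implicit assumption that every good has strictly positive supply).
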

\begin{proof}
We know that a competitive equilibrium $(\qb, (\yb^i))$ exists, as our market can be considered as an Arrow-Debreu exchange market \citep{Arrow-Debreu1954,Chen2007}. As this equilibrium is a feasible outcome, and $\pb^*$ is the elementwise-minimal feasible price vector, we have $\qb \geq \pb^*$.
Suppose, for the sake of contradiction, that $\qb \gneq \pb^*$. We let $(\xb^i)$ be an allocation that revenue-maximally extends $\pb^*$ to a feasible outcome. For convenience, let $\xb \coloneqq \sum_{i \in [m]} \xb^i$ be the aggregate demand of this outcome. As $(\qb, (\yb^i))$ clears the market, the respective revenues achieved by outcomes $(\qb, (\yb^i))$ and $(\pb^*, (\xb^i))$ are $\sum_{j \in [n]} q_j s_j$ and $\sum_{j \in [n]} p^*_j x_j$. As $\bm{0} \leq \xb \leq \ssb$ due to the feasibility of $(\pb^*, (\xb^i))$, and revenue is maximised at $\pb^*$ by \cref{proposition:revenue}, we arrive at the contradiction
\[
\sum_{j \in [n]} q_j s_j \leq \sum_{j \in [n]} p^*_j x_j \leq \sum_{j \in [n]} p^*_j s_j < \sum_{j \in [n]} q_j s_j.
\]
Here we use the fact that $\qb \gneq \pb^*$ for the strict inequality. The contradiction thus implies that $\pb^*$ are the unique market-clearing prices.
\end{proof}

\Cref{theorem:coincidence} now follows immediately from 
\cref{proposition:equilibrium,proposition:revenue}.

\section{Arctic Auctions}
\label{sec:arctic}
The market we consider can be interpreted as an important special case of the Arctic Product-Mix Auction (Arctic PMA). This auction was designed by \citet{Klemperer2018} for the government of Iceland, who wished to provide a mechanism for holders to exchange ``blocked'' offshore funds for alternative financial instruments. IMF staff has more recently proposed using a version of the Arctic PMA for sovereign debt restructuring, allowing creditors to exchange their claims for alternative debt instruments \citep{Baldwin2024language}.

When the sellers' costs are zero, the arctic PMA can be understood as a quasi-Fisher market, where money is priced at 1. As in our market, the seller in the Arctic PMA has a fixed supply $\vec{s} \in \Rn_+$ of $n$ divisible goods (financial assets) and wishes to find a feasible allocation of some subset of this supply among a finite set of buyers with the goal of maximising revenue. In the general Arctic PMA, the seller can additionally choose cost functions for their supply, while in our model, we assume the seller's costs to be zero.%
\footnote{The Arctic Product-Mix Auction is a variant of the original Product-Mix Auction developed for the Bank of England by Paul Klemperer \citep{Klemperer2008,klemperer2010product,Klemperer2018}. See also \citep{fichtl2022computing} for some discussion of the general case with non-zero costs.}

Each buyer is constrained to a budget that corresponds to the quantity of blocked offshore funds she holds. The buyers express their demand by submitting a collection of ``arctic bids'', each of which consists of an $n$-dimensional vector $\bid \in \Rn_+$ and a monetary budget $\beta(\bid)$. Each bid is associated with a demand correspondence, and the demands of a buyer's bids are then aggregated to yield the buyer's demand.

At a market price \emph{above the stated bid price}, an arctic bid \emph{rejects the good}. At market prices below the stated bid prices, the seller assigns the corresponding bid's budget to goods that yield the highest ``bang-per-buck'', i.e.,~the highest ratio of value to price $\frac{b_j}{p_j}$.
When multiple goods maximise the bid's bang-per-buck, the seller can arbitrarily divide the bid's budget between these goods. Moreover, if the maximal bang-per-buck is 1, then the seller can choose not to use some of the bid's budget (which we interpret as spending on the \nullgood{} good).
Hence, \cref{lemma:bang-per-buck} implies that each arctic bid $\bid$, interpreted in isolation, induces a quasi-linear demand with linear valuation $v(\xb) = \bid \cdot \xb$ and budget $\beta(\bid)$ as defined in \cref{sec:markets-preferences-objectives}. We denote the demand correspondence of each bid $\bid$ by~$D_{\bid}$.

The demand correspondence of a collection $\bids$ of bids is defined by the Minkowski sum of demands $D_{\bids}(\pb) \coloneqq \left\{ \sum_{\bid \in \bids} \xb^{\bid} \mid \xb^{\bid} \in D_{\bid}(\pb) \right\}$. Equivalently, $D_{\bids}$ can be understood as the aggregate demand of $|\bids|$ buyers with quasi-linear demand, linear valuations $\bid \in \bids$ and budgets $\beta(\bid)$. By submitting multiple arctic bids, buyers can express richer preferences. Nevertheless, it is straightforward to see from the definition of $D_\bids$ that, for the purposes of solving the auction (for welfare or for revenue), the seller can treat each bid independently, and we can assume without loss of generality that each buyer submits a single bid. Our results for quasi-Fisher markets on the coincidence of constrained efficiency and optimal revenue thus hold also for the arctic auction we describe above.

\section{Conclusion}\label{sec:conclusion}

In this article, we explore whether price-only mechanisms -- common in the digital economy and financial asset exchanges -- can achieve a form of efficiency that is attainable under budget constraints. We find that a unique set of prices is constrained utilitarian efficient, respecting buyers' budgets, and simultaneously revenue-optimal for the seller. This coincidence of revenue optimality and constrained efficiency makes our market compelling in theory and highly attractive to sellers, buyers, and market platforms in practice. Our results contribute to the policy debate on the regulation of digital monopolies. They suggest that a regulatory framework which requires digital advertising platforms to set the market-clearing price vector might unify the interests of the seller and the social planner (subject to the assumptions of our market setup), promoting competition and constrained social welfare.

Our approach is based on a geometric understanding of the structure underlying feasible market prices, which may be of independent interest. Future work includes addressing the open question of whether the revenue-welfare equivalence holds for other classes of preferences and markets with multiple buyers and sellers.

\bibliographystyle{te} 
\bibliography{refs}

\appendix
\section*{Omitted Proofs and Examples}
\label{app:sec:omitted-proofs}

\begin{proof}[Proof of \cref{prop:concave-v}]\phantomsection\label{proof:prop:concave-v}
Recall that at any price $p$, the buyer demands the bundle $x$ that maximises $v(x) - px$ subject to his budget constraint $px \leq \beta$. Given valuation $v(x)$, revenue is maximised at $(x,p)\in \argmax_{x,p} px$ such that $v'(x) = p$ and $x\leq s$. Thus, maximal revenue given $v$ and $s$ is $v'(x) x$ for some $x\leq s$. Social welfare is maximised at $(x,p)\in \argmax_{x,p} v(x) - px$ such that $x\leq s$, i.e.~revenue at the social optimum is $v'(s)s$.
We show that there exists $x$ and $s$ with $x < s$ and $v'(x) x > v'(s) s$. Recall that a differentiable function $f : \R \to \R$ is strongly concave if it satisfies $|f'(x) - f'(y)| \geq m \|x-y\|$ for all distinct $x, y \in \R$ ($f$ does not need to be twice differentiable).

First, note that there exists $\overline{x} < \infty$ such that $v'(x) < mx$ for all $x \geq \overline{x}$, due to strict concavity of $v$. Fix some supply $s \geq \overline{x}$ and let $\varepsilon = \frac{1}{2}(ms - v'(s))$. As $v$ is strongly concave, we also have  $v'(s - \varepsilon)  \geq v'(s) + m \varepsilon$ for any $\varepsilon$. Hence, $v'(s-\epsilon)(s-\epsilon) \geq (v'(s) + m\epsilon)(s-\epsilon) = v'(s) s + \epsilon(ms - v'(s) - m\epsilon) > v'(s) s$.
\end{proof}

\begin{proof}[Proof of \cref{prop:concave-v-2}]
If there exists supply $s \in X$ with $v'(s) < ms$, then the result follows analogously to the proof of \cref{prop:concave-v}.
We now prove the second part of the statement. The budget constraint is given by $px\leq \beta$. At the demanded quantity, $v'(x) = p$ holds. Thus, for all feasible $x$, it must hold that $v'(x) \leq \beta/x$, so~$x\leq\tilde{x}$. Now we demonstrate that $v'(\tilde{x}-\epsilon) \leq m(\tilde{x}-\epsilon)$ for some small $\epsilon$. Then the result follows from \cref{prop:concave-v-2}. For some $\delta > 0$, we have
\begin{align*}
    m (\tilde{x}- \epsilon) & \geq \left(\frac{v'(\tilde{x})}{\tilde{x}}+\delta\right) \left(\tilde{x} - \epsilon\right)
    = \frac{\left(v'(\tilde{x}) + \delta \tilde{x}\right)\left(\tilde{x}-\epsilon\right)}{\tilde{x}}
    \geq v'(\tilde{x} - \epsilon).
\end{align*}
The last inequality holds for $\epsilon \to 0$ and some $\delta > 0$.
\end{proof}

\begin{example}\label{example:concave-no-budget}
Consider an auction with a single good available in $s=3$ units that has a single buyer. The buyer has valuation $v:\R \to \R$ given by $v(x) = \frac{4}{\log 2} (1-2^{-x})$ and budget 2. Then revenue is not maximised at market-clearing prices.
Indeed, note that the utility of the buyer for quantity $q$ at price $p$ is $u(x,p) = v(x) - px$, so the buyer's demand $D(p)$ at $p$ is found by solving $v'(x) = p$, which yields $x = -\log_2(\frac{p}{4})$. At $p=0.5$, we have demand $x = 3$, so $p$ clears the market. Revenue at $p$ is $px = 1.5$. At price $q=1$, we have demand $y=2$, so $q$ does not clear the market, but revenue is $qy = 2$, which is greater. Revenue is maximised at $p = \frac{4}{e}$ with a demanded quantity of $\frac{1}{\log(2)}$ and a revenue of $\frac{4}{e \log(2)}$.
\end{example}
\end{document}